\author{Carina Boyallian and Juan Guzm\'an}
\email{boyallia@mate.uncor.edu - juan.gabriel.guzman@unc.edu.ar}
\address{Facultad de Matemática, Astronomía, Física y Computación, Universidad Nacional de Córdoba. CIEM-CONICET, Medina Allende s/n (5000) Ciudad Universitaria, Córdoba, Argentina}
\title{Formal vertex laws associated to Lie conformal algebras}
\date{}
\theoremstyle{plain}
\newtheorem{theorem}{Theorem}[section]
\newtheorem{lemma}[theorem]{Lemma}
\newtheorem{corollary}[theorem]{Corollary}
\newtheorem{proposition}[theorem]{Proposition}
\theoremstyle{definition}
\newtheorem{definition}[theorem]{Definition}
\newtheorem{example}[theorem]{Example}
\theoremstyle{remark}
\newtheorem{remark}[theorem]{Remark}
\newtheorem*{claim*}{Claim}
\numberwithin{equation}{section}
\newcommand{\C}{\mathbb{C}}
\newcommand{\Z}{\mathbb{Z}}
\newcommand{\N}{\mathbb{N}}
\newcommand{\Del}{\partial}
\newcommand{\dst}{\displaystyle}
\newcommand{\lam}{\lambda}
\newcommand{\U}{\mathcal{U}}
\newcommand{\1}{\textbf{1}}
\newcommand{\Ydual}{\rotatebox[origin=c]{180}{Y}}
\newcommand{\hotimes}{\mathbin{\hat{\otimes}}}
\newcommand\blfootnote[1]{%
  \begingroup
  \renewcommand\thefootnote{}\footnote{#1}%
  \addtocounter{footnote}{-1}%
  \endgroup
}
\newsavebox{\tempbox}
\begin{document}

\begin{abstract}
We introduce several definitions within the framework of vertex and conformal algebras which are analogous to some important concepts of the classical Lie theory. Most importantly, we define formal vertex laws, which correspond to the notion of formal group law. We prove suitable vertex/conformal versions of a number of classical results such as the Milnor-Moore theorem, Cartier duality, and the equivalence between formal group laws and Lie algebras. 
\end{abstract}

\maketitle

\section{Introduction}

\blfootnote{This work was partially supported by CONICET and SeCyT.}
Vertex algebras were first defined (under the name ``chiral algebras'') by Borcherds in \cite{Borcherds} as a purely algebraic description of the chiral part of a conformal field theory (CFT). The OPE of the chiral fields in a CFT form an algebraic structure that in a certain way resembles a Lie algebra, which was named Lie conformal algebra \cite{Kac_beginners}. There are strong parallelisms between vertex algebras and associative algebras, as well as between Lie conformal algebras and Lie algebras \cite{Beilinson_Drinfeld,Kac_beginners}. The most important analogy may be the fact that any Lie conformal algebra $R$ has a universal enveloping vertex algebra $\U(R)$, which enjoys many of the properties that universal enveloping associative algebras of Lie algebras have \cite{Kac_beginners,Bakalov_field_algebras}. 

This relationship suggests that it might be worth studying other constructions arising in Lie theory in order to find out whether they can be assigned some object in the conformal setting having similar properties. In this paper we follow this path and present conformal/vertex analogues for several important notions and classical results, amongst which we have bialgebras, Cartier duality, and most importantly, formal group laws.

The classical Lie theory establishes certain equivalences of categories, which we will now briefly recount. Given a Lie algebra $\mathfrak{g}$, we have that $\U(\mathfrak{g})$ is a cocommutative and connected bialgebra, and all such bialgebras are of that form due to the Milnor-Moore theorem \cite{Montgomery}. Its dual space $\U(\mathfrak{g})^*$, endowed with the linearly compact topology, becomes a commutative and local topological bialgebra through Cartier duality \cite{Serre,Dieudonne,Hazewinkel_formalgroups}. Moreover, if $\mathfrak{g}$ has a basis indexed by $\mathcal{I}$, the PBW basis of $\U(\mathfrak{g})$ can be used to define an algebra isomorphism $\U(\mathfrak{g})^* \simeq \C[[X_{i}: i \in \mathcal{I}]]$. Using this map to translate the coalgebra structure of $\U(\mathfrak{g})^*$ to this algebra, we may define the formal power series $F_{l}(X,Y)=\Delta(X_{l}) \in \C[[X_{i},Y_{i} : i \in \mathcal{I}]]$ for each $l \in \mathcal{I}$. 

The set $F(X,Y)=\{F_{l}(X,Y) \}_{l \in \mathcal{I}}$ satisfies the following properties
\begin{enumerate}[label=(\roman*)]
    \item $F(X,0)=X=F(0,X)$.
    \item $F(X,F(Y,Z))=F(F(X,Y),Z)$.
\end{enumerate}
Any $F(X,Y)$ satisfying these axioms receives the name \textit{formal group law} (some caution needs to be taken when $\mathcal{I}$ is infinite, we will specify the details in Section \ref{section_formal_vertex_laws}). It can be proved \cite{Hazewinkel_formalgroups} that there exists a formal power series $I(X)$ such that $F(X,I(X))=F(I(X),X)=0$, and thus a formal group law may be interpreted as a formalization of the axioms that define a group structure.

The correspondences described in the above paragraphs may be summarized as the following chain of equivalences of categories.
\vspace{-0.3cm}
\begin{center}
\begin{tikzpicture}
    \draw (0,0) node {$\bigg\{$};
    \draw (0.9,0.25) node {Lie};
    \draw (0.9,-0.25) node {algebras};
    \draw (1.8,0) node {$\bigg\}$};

    \draw (2.2,0) node {$\simeq$};

    \draw (2.6,0) node {$\bigg\{ $};
    \draw (4.1,0.25) node {Cocomm. conn.};
    \draw (4.1,-0.25) node {bialgebras};
    \draw (5.6,0) node {$\bigg\}$};

    \draw (6,0) node {$\simeq$};

    \draw (6.4,0) node {$\bigg\{$};
    \draw (7.8,0.25) node {Comm. local};
    \draw (7.8,-0.25) node {top. bialgebras};
    \draw (9.2,0) node {$\bigg\}$};
    \draw (9.45,0.3) node {\textsuperscript{op}};

    \draw (9.75,0) node {$\simeq$};

    \draw (10.15,0) node {$\bigg\{$};
    \draw (11.2,0.25) node {Formal};
    \draw (11.2,-0.25) node {group laws};
    \draw (12.25,0) node {$\bigg\}$};
\end{tikzpicture}
\end{center}

This work is dedicated to the task of reproducing these equivalences within the vertex/conformal context. Given a Lie conformal algebra $R$, we prove an appropriate version of the Milnor-Moore theorem between $R$ and the vertex bialgebra $\U(R)$ (first defined in \cite{Li_bialgebras}), and then we define \textit{vertex bicoalgebras} as the suitable structure to consider in $\U(R)^*$ in order to establish a form of Cartier duality. The notion of duality for vertex algebras has been studied in \cite{Hubbard_vertex_coalgebras,Hubbard_vertex_operator_coalg}, but our approach is different because we need to take into account the topology of these spaces. Lastly, choosing a vector space basis of $R$ indexed by $\mathcal{I}$, we give an algebra isomorphism $\U(R)^{*}  \simeq \C[[X_{i}: i \in \mathcal{I}]]$ and use it to encode the additional ``vertex structure'' of $\U(R)^*$ into a set of formal power series $$F(x)(X,Y)=\Big\{ \sum_{n \in \Z} F^{n}_{l}(X,Y) x^{-n-1}  \Big\}_{l \in \mathcal{I}} \in \C[[X_{i},Y_{i},x,x^{-1}]]^{\mathcal{I}},$$
which we call \textit{formal vertex laws}, since the vertex bicoalgebra axioms in $\U(R)^{*}$ are translated into certain properties of these formal power series which are similar to the axioms of vertex algebras.

Therefore, we have been able to find analogies for most of the classical concepts that we described earlier. One exception though consists of the antipode for $\U(R)$, preventing us from defining the notion of Hopf vertex algebra. The main issues arise from the fact that $\U(R)$ is not an associative algebra. Nonetheless, this does not prevent us from pursuing our goals, since in the classical theory the antipode of a cocommutative connected Hopf algebra is not an additional piece of information, it is determined by its bialgebra structure.

This work is part of an ongoing research whose aim is to define a conformal analogue for the notion of Lie group. Lie groups may be studied at the infinitesimal, local or global level; we present here the conformal version of the infinitesimal level of the theory and leave the local and global parts for future publications.

The first author would like to thank V. Kac for suggesting this problem.

\tableofcontents

\section{Preliminaries}
We begin by introducing the basic definitions. All vector spaces will be over $\C$. We will consider commuting formal variables $x$, $x_0$, $x_1$ and $x_2$. Let the ``formal $\delta$-function'' be $$\delta(x)=\sum_{n \in \Z}x^n.$$

For any $n \in \Z$, we define $$(x_1 \pm x_2)^n = \sum_{k \geq 0} \binom{n}{k} x_1^{n-k}(\pm x_2)^k,$$ where $\binom{n}{k}=\frac{n(n-1) \cdots (n-k)}{k!}$ for $n \in \Z$ and $k \in \Z_{+}=\N \cup \{0\}$. We can present now the definition of vertex algebra that we shall work with.

\begin{definition}
A \textit{vertex algebra} consists of a vector space $U$ together with a distinguished element $\1 \in U$ called \textit{vacuum} vector and a linear map
\begin{align*}
Y(\cdot,x) : U &\rightarrow (\text{End }U)[[x,x^{-1}]] \\
             u &\mapsto Y(u,x)=\sum_{n \in \Z}u_{(n)}\,x^{-n-1},
\end{align*}
satisfying the following axioms:
\begin{enumerate}[label=(\roman*)]
\item Truncation: For all $u,v \in U$,
\begin{equation}\label{truncation_vertex}
Y(u,x)v \in U((x)).
\end{equation}
Here $U((x))$ is the space of Laurent power series in $x$ with coefficients in $U$. This axiom is equivalent to requiring $u_{(n)}v=0$ for $n >> 0$.
\item Left unit: For all $u \in U$,
\begin{equation}\label{left_unit_vertex}
Y(\1,x)u=u.
\end{equation}
\item Creation: For all $u \in U$, $Y(u,x)\1$ is a holomorphic formal power series in $x$ and
\begin{equation}
Y(u,x)\1 \big|_{x=0} = u.\label{creation_vertex}
\end{equation}
Note that (\ref{creation_vertex}) means that the independent term of $Y(u,x)\1$ is $u$.
\item Jacobi identity: For all $u,v \in U$,
\begin{align}
 x_0^{-1}\delta\left(\frac{x_1-x_2}{x_0}\right)&Y(u,x_1)Y(v,x_2) - x_0^{-1}\delta\left(\frac{x_2-x_1}{-x_0}\right)Y(v,x_2)Y(u,x_1) \nonumber\\
&=x_2^{-1}\delta\left(\frac{x_1-x_0}{x_2}\right)Y(Y(u,x_0)v,x_2).
\end{align}
\end{enumerate}
\end{definition}

\begin{remark}
This is the definition used in \cite{Lepowsky_Li} and \cite{Hubbard_vertex_coalgebras}, but there exist several equivalent definitions (cf. \cite{Kac_beginners,Bakalov_field_algebras}).
\end{remark}

The truncation axiom is vital in this definition, since in order for the Jacobi identity to make sense, we must check that the coefficients of each monomial in the three formal variables become finite sums when applied to elements in $U$, and this is true because of the truncation property.

The operations $\cdot_{(n)}\cdot : U \otimes U \rightarrow U$, $u \otimes v \mapsto u_{(n)}v$ for $n \in \Z$ are called the \textit{$n$-th products} of the vertex algebra. The $(-1)$-st product is called \textit{normally ordered product}, and we denote it by $a_{(-1)}b=\,:ab:$. This product is generally non-associative, so whenever we take the normally ordered product of more than two elements it must be read associating the terms from right to left.

We define a linear operator $\Del: U \rightarrow U$ by $\Del u = u_{(-2)}\1$, called the \textit{translation operator}, which turns $U$ into a $\C[\Del]$-module. Setting $\Del^{(j)}=\frac{1}{j!}\Del^{j}$ for all $j \geq 0$, it is easy to prove that for any $u, v \in U$ the identity $$u_{(-j-1)}v=\,:(\Del^{(j)}u)v:$$holds, so the negative $n$-th products in $U$ are determined by the normally ordered product and the action of $\C[\Del]$. In particular, for all $u \in U$ we have $$Y(u,x)\1=e^{x\Del}u.$$
\begin{example}
If $U$ is any associative commutative algebra, then we can turn it into a vertex algebra by taking $\1$ to be the unit of the algebra, and taking all of its $n$-th products to be zero except for the normally ordered product, which we define as the product of the algebra. These are the most basic examples of vertex algebras, and they are called \textit{trivial vertex algebras}.
\end{example}

\begin{definition}
A \textit{vertex algebra homomorphism} is a linear map $f : U \rightarrow \tilde{U}$ between vertex algebras such that for all $u,u' \in U$, $$f(\1_{U})=\1_{\tilde{U}} \quad \text{and} \quad f(Y_{U}(u,x)u')=Y_{\tilde{U}}(f(u),x)f(u').$$
\end{definition}
Let us denote by $\mathcal{VA}$ the category whose objects are vertex algebras and whose morphisms are vertex algebra homomorphisms. This is a monoidal category, where the unit object is $\C$ with the trivial vertex algebra structure, and the monoidal product of two vertex algebras $U_1$ and $U_2$ is the tensor product $U_1 \otimes U_2$, with vertex algebra structure defined by taking $\1_{U_1} \otimes \1_{U_2}$ as the vacuum vector and $Y_{U_1 \otimes U_2}(u_1 \otimes u_2,x)=Y_{U_1}(u_1,x) \otimes Y_{U_2}(u_2,x)$ for $u_1, u_2 \in U$ (see \cite{Kac_beginners}).

We will focus our attention on a special type of vertex algebras: those that can be realized as universal enveloping algebras of Lie conformal algebras. Let us briefly recall this construction, as well as the definitions needed.

\begin{definition}
A \textit{Lie conformal algebra} is a $\C[\Del]$-module $R$ with a linear map
\begin{align*}
[\cdot_{\lam}\cdot]: R \otimes R &\rightarrow R [\lam] \\
                    a \otimes b &\mapsto [a_{\lam}b]=\sum_{n \in \Z_{+}}\frac{1}{n!}\, \lam^{n}\,a_{(n)}b
\end{align*}
such that the following conditions hold:
\begin{enumerate}[label=(\roman*)]
\item $\Del$-sesquilinearity: For all $a$, $b \in R$,
\begin{align}
[\Del a_{\lam}b] &= - \lam [a_{\lam}b] \text{ and} \\
[a_{\lam}\Del b] &= (\Del+\lam) [a_{\lam}b].
\end{align}
\item Antisimmetry: For all $a$, $b\in R$,
\begin{equation}
[a_{\lam}b]=-[b_{-\lam-\Del}a].
\end{equation}
\item Jacobi identity: For all $a$, $b$ and $c \in R$,
\begin{equation}\label{Jacobi_conformal}
[a_{\lam}[b_{\mu}c]] = [[a_{\lam}b]_{\lam+\mu}c] + [b_{\mu}[a_{\lam}c]].
\end{equation}
\end{enumerate}
\end{definition}

Once again we encounter $n$-th products, but in the case of Lie conformal algebras they are only defined for $n \in \Z_+$. In fact, a vertex algebra $U$ is always a Lie conformal algebra, when we take into account only the non-negative $n$-th products and define the $\C[\Del]$-action as $\Del a = a_{(-2)}\1$ for all $a \in U$. Moreover, vertex algebras may be equivalently defined as Lie conformal algebras with an extra structure of a $\C[\Del]$-differential unital non-associative algebra (given by the normally ordered product), satisfying certain compatibilities (cf. \cite{Bakalov_field_algebras,DeSole}). 

Let $R$ be a Lie conformal algebra. The bracket on $R$ defined by
\begin{equation}\label{Lie_bracket_R}
[a,b]=\int_{-\Del}^{0}[a_{\lam}b]\,d\lam \qquad \forall a,b \in R
\end{equation}
turns $R$ into a Lie algebra, which we denote by $R_{Lie}$. Thus we may construct its universal enveloping (associative) algebra $\U(R_{Lie})$, and it turns out that there exists a unique structure of a vertex algebra on $\U(R_{Lie})$ such that the restriction of its $\lam$-bracket to $R_{Lie} \times R_{Lie}$ coincides with the $\lam$-bracket of $R$ and the restriction of its normally ordered product to $R_{Lie} \times \U(R_{Lie})$ coincides with the associative product in $\U(R_{Lie})$ \cite[Thm.~7.12]{Bakalov_field_algebras}.

\begin{definition}
$\U(R_{Lie})$ with this vertex algebra structure is the \textit{universal enveloping vertex algebra} of the Lie conformal algebra $R$, and we denote it by $\U(R)$.
\end{definition}

The vertex algebra $\U(R)$ satisfies the following universal property (see \cite{Bakalov_field_algebras, DeSole}): if $W$ is a vertex algebra and $f: R \rightarrow W$ is a homomorphism of Lie conformal algebras (that is, a morphism of $\C[\Del]$-modules preserving the $\lam$-brackets), then there exists a unique homomorphism of vertex algebras $\tilde{f}: \U(R) \rightarrow W$ that extends $f$. That is, $\tilde{f} i = f$, where $i : R \rightarrow \U(R)$ is the inclusion map.

Since $\U(R)$ as a vector space is just the universal enveloping algebra of $R_{Lie}$, we have a PBW basis for $\U(R)$, and we may write it using the normally ordered product: if $\{a_{i} : i \in \mathcal{I} \}$ is an ordered basis for $R$ as a vector space, then we can form a basis of $\U(R)$ by considering all elements of the form $$:a_{i_1} \cdots \,a_{i_{k}}:,$$ where $i_1 \leq ... \leq i_k$ in $\mathcal{I}$ and $k \geq 0$.

\section{Vertex bialgebras and the Milnor-Moore theorem}\label{section_vertex_bialgebras}

Let us recall that for any Lie algebra $\mathfrak{g}$, there is a canonical structure of cocommutative bialgebra on $\U(\mathfrak{g})$, with coproduct and counit induced by $$\Delta(x)= 1 \otimes x + x \otimes 1,\quad \epsilon(x) = 0 \qquad \forall x \in \mathfrak{g}.$$

We will use Sweedler's notation $\Delta(x)=x_{(1)}\otimes x_{(2)}$ for coalgebras, where the summation is implicit.

A similar procedure can be performed on the universal enveloping vertex algebra $\U(R)$ of a Lie conformal algebra $R$: there are unique vertex algebra homomorphisms $\Delta: \U(R) \rightarrow \U(R) \otimes \U(R)$ and $\epsilon: \U(R) \rightarrow \C$ such that $\Delta(x)=\1 \otimes x + x \otimes \1$ and $\epsilon(x)=0$ for all $x \in R$ \cite{Li_bialgebras}.

Therefore, $\U(R)$ becomes our first example of a vertex bialgebra.

\begin{definition}(\cite{Li_bialgebras})
A \textit{vertex bialgebra} is a vertex algebra $U$ together with a coalgebra structure $(U,\Delta,\epsilon)$ such that the maps $$\Delta: U \rightarrow U \otimes U \quad \text{and} \quad \epsilon: U \rightarrow \C $$ are vertex algebra homomorphisms. Thus a vertex bialgebra is simply a coalgebra in the monoidal category of vertex algebras $\mathcal{VA}$.
\end{definition}

If $U$ is a vertex bialgebra, we shall denote by $\mathcal{P}(U)$ the subset of \textit{primitive} elements of $U$, that is, the elements $u \in U$ such that $\Delta(u)=u \otimes \1 + \1 \otimes u$. In the classical Lie theory the primitives of a bialgebra form a Lie algebra, and in our case we can prove a similar result.

\begin{proposition}
$\mathcal{P}(U)$ is a Lie conformal subalgebra of $U$.
\end{proposition}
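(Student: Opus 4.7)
The plan is to show that $\mathcal{P}(U)$ is closed under $\partial$ and under all non-negative $n$-th products, since those two data determine the Lie conformal algebra structure on $U$ and their restrictions to $\mathcal{P}(U)$ will automatically satisfy the Lie conformal axioms inherited from $U$.

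The main tool will be that $\Delta : U \to U \otimes U$ is a vertex algebra homomorphism, so it preserves every $n$-th product and sends $\mathbf{1}$ to $\mathbf{1}\otimes\mathbf{1}$. First I would record the explicit formula for $n$-th products in the tensor product vertex algebra that follows from $Y_{U\otimes U}(a\otimes b,x)=Y_U(a,x)\otimes Y_U(b,x)$, namely
\begin{equation*}
(a\otimes b)_{(k)}(c\otimes d)=\sum_{m+n=k-1}a_{(m)}c\otimes b_{(n)}d,
\end{equation*}
together with the standard facts $\mathbf{1}_{(n)}w=\delta_{n,-1}w$ (from the left unit axiom) and $u_{(n)}\mathbf{1}=0$ for $n\geq 0$ (from the creation axiom).

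Now let $u,v\in\mathcal{P}(U)$. For closure under $\partial$, since $\partial u=u_{(-2)}\mathbf{1}$ and $\Delta$ is a vertex homomorphism, I compute
\begin{equation*}
\Delta(\partial u)=(u\otimes\mathbf{1}+\mathbf{1}\otimes u)_{(-2)}(\mathbf{1}\otimes\mathbf{1}),
\end{equation*}
and the tensor-product formula together with $\mathbf{1}_{(n)}\mathbf{1}=\delta_{n,-1}\mathbf{1}$ collapses each of the two summands to a single nonzero term, giving $\Delta(\partial u)=\partial u\otimes\mathbf{1}+\mathbf{1}\otimes\partial u$. For closure under the $n$-th products with $n\geq 0$, I expand
\begin{equation*}
\Delta(u_{(n)}v)=(u\otimes\mathbf{1}+\mathbf{1}\otimes u)_{(n)}(v\otimes\mathbf{1}+\mathbf{1}\otimes v)
\end{equation*}
into four terms; the two ``mixed'' cross terms contain a factor of the form $u_{(n)}\mathbf{1}$ or its analog and therefore vanish for $n\geq 0$ by the creation axiom, while the two remaining terms simplify via the unit computation to $u_{(n)}v\otimes\mathbf{1}$ and $\mathbf{1}\otimes u_{(n)}v$, showing that $u_{(n)}v\in\mathcal{P}(U)$.

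Having established both closure properties, $\mathcal{P}(U)$ inherits a $\mathbb{C}[\partial]$-module structure and a $\lambda$-bracket from $U$, and since $\partial$-sesquilinearity, skew-symmetry, and the Jacobi identity for the $\lambda$-bracket already hold in $U$, they hold on the subspace $\mathcal{P}(U)$ as well. I expect the only mildly technical step to be the bookkeeping in the tensor-product $n$-th product formula — in particular, keeping track of the index shift $m+n=k-1$ that comes from the double $x^{-m-1}x^{-n-1}$ — but everything beyond that is a direct unit/creation-axiom verification.
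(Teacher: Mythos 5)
Your proof is correct and follows essentially the same route as the paper: both arguments rest on $\Delta$ being a vertex algebra homomorphism, expand $\Delta(u_{(n)}v)$ into four terms via the tensor-product vertex structure, kill the cross terms using $u_{(n)}\1=0$ for $n\geq 0$ (the paper phrases this as the holomorphy of $Y(u,x)\1=e^{x\Del}u$), and handle closure under $\Del$ through $\Del u=u_{(-2)}\1$. The only cosmetic difference is that you work coefficient-by-coefficient with the $n$-th product formula while the paper manipulates the generating series $Y(u,x)v$ directly.
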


\begin{proof}
Let $a, b \in \mathcal{P}(U)$. Then
\begin{align*}
\Delta(Y(a,x)b)&=Y(\Delta a, x)\Delta b\\
&=Y(a \otimes \1 + \1 \otimes a, x)(b \otimes \1 + \1 \otimes b)\\
&=Y(a,x)b \otimes Y(\1,x)\1 + Y(a,x)\1 \otimes Y(\1,x)b \\
&\quad + Y(\1,x)b \otimes Y(a,x)\1 + Y(\1,x)\1 \otimes Y(a,x)b\\
&=Y(a,x)b \otimes \1 + e^{x\Del} a\otimes b + b \otimes e^{x \Del}a + \1 \otimes Y(a,x)b \1.
\end{align*}
Since $e^{x \Del}a$ is holomorphic, comparing the coefficients of $x^{-n-1}$ in both sides for non-negative values of $n$ gives the formula $$\Delta(a_{(n)}b)=a_{(n)}b \otimes \1 + \1 \otimes a_{(n)}b,$$
and thus $a_{(n)}b \in \mathcal{P}(U)$ for all $n \geq 0$. Analogously, for any $a \in \mathcal{P}(U)$ we have
\begin{align*}
\Delta (Y(a,x)\1) &= Y(\Delta a,x)\Delta \1\\
&=Y(a,x)\1 \otimes \1 + \1 \otimes Y(a,x)\1,
\end{align*}
so comparing the coefficient of degree one on each side yields that $\Delta(\Del a)=\Del a \otimes 1 + \1 \otimes \Del a$, which means that $\mathcal{P}(U)$ is a $\C[\Del]$-submodule of $U$.
\end{proof}

\begin{remark}\label{PUR_equals_R}
If $R$ is a Lie conformal algebra, it is clear that $\mathcal{P}(\U(R)) = R$, because the primitives depend only on the coalgebra structure and the unit of $\U(R)$, and therefore $$\mathcal{P}(\U(R))=\mathcal{P}(\U(R_{Lie}))=R_{Lie}=R.$$
\end{remark}

A \textit{homomorphism of vertex bialgebras} is a map between vertex bialgebras which is both a homomorphism of vertex algebras and of coalgebras. We can extend the universal property of $\U(R)$ to the vertex bialgebra setting.

\begin{proposition}\label{universal_property_vertex_bialgebras}
Let $W$ be a vertex bialgebra and $\phi: R \rightarrow \mathcal{P}(W)$ a homomorphism of Lie conformal algebras. Then there exists a unique homomorphism of vertex bialgebras $\tilde{\phi}: \U(R) \rightarrow W$ that extends $\phi$.
\end{proposition}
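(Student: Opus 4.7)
The plan is to leverage the universal property of $\U(R)$ as a vertex algebra (stated earlier in the excerpt) to get $\tilde{\phi}$ as a vertex algebra map, and then to check that it automatically respects the coalgebra structure by another application of the uniqueness part of that same universal property.

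First, I would observe that since $\mathcal{P}(W)$ is a Lie conformal subalgebra of $W$ (by the preceding proposition), the composition $R \xrightarrow{\phi} \mathcal{P}(W) \hookrightarrow W$ is a homomorphism of Lie conformal algebras into the vertex algebra $W$. The universal property of $\U(R)$ as a vertex algebra then yields a unique vertex algebra homomorphism $\tilde{\phi} : \U(R) \to W$ extending $\phi$. Uniqueness as a vertex bialgebra map is immediate from uniqueness as a vertex algebra map, so only existence of the coalgebra compatibility remains.

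To show that $\tilde{\phi}$ is a coalgebra homomorphism, I would consider the two maps
\[
\Delta_{W} \circ \tilde{\phi}, \quad (\tilde{\phi} \otimes \tilde{\phi}) \circ \Delta_{\U(R)} \; : \; \U(R) \longrightarrow W \otimes W.
\]
Both are compositions of vertex algebra homomorphisms ($\Delta_W$ and $\Delta_{\U(R)}$ are such by the vertex bialgebra axiom), hence are themselves vertex algebra homomorphisms. For $x \in R$, the primitivity of $\phi(x) \in \mathcal{P}(W)$ gives
\[
\Delta_{W}(\tilde{\phi}(x)) = \Delta_{W}(\phi(x)) = \phi(x) \otimes \1 + \1 \otimes \phi(x),
\]
while
\[
(\tilde{\phi} \otimes \tilde{\phi})(\Delta_{\U(R)}(x)) = (\tilde{\phi} \otimes \tilde{\phi})(x \otimes \1 + \1 \otimes x) = \phi(x) \otimes \1 + \1 \otimes \phi(x),
\]
so the two maps agree on the generating Lie conformal subalgebra $R \subset \U(R)$. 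By the uniqueness clause in the universal property of $\U(R)$ (applied to the Lie conformal algebra map $R \to W \otimes W$, $x \mapsto \phi(x) \otimes \1 + \1 \otimes \phi(x)$), the two vertex algebra homomorphisms must coincide on all of $\U(R)$, giving $\Delta_W \circ \tilde{\phi} = (\tilde{\phi} \otimes \tilde{\phi}) \circ \Delta_{\U(R)}$.

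An identical argument handles the counit: $\epsilon_W \circ \tilde{\phi}$ and $\epsilon_{\U(R)}$ are both vertex algebra homomorphisms $\U(R) \to \C$ that vanish on $R$, so by uniqueness they are equal. There is no real obstacle here; the only subtlety worth mentioning is that one must know $\mathcal{P}(W)$ really is a Lie conformal subalgebra (so that $\phi$ makes sense as a Lie conformal algebra map into $W$) and that the universal property of $\U(R)$ applies to arbitrary target vertex algebras, both of which are already in hand from the excerpt.
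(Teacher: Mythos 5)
Your proof is correct, and it follows the same overall strategy as the paper: obtain $\tilde{\phi}$ from the universal property of $\U(R)$ as a vertex algebra, note that uniqueness is automatic, and reduce the coalgebra compatibility to the generators $R$ (and $\1$), where it holds by primitivity of $\phi(a)$. The one genuine difference is the mechanism for propagating the identity $\Delta_W \circ \tilde{\phi} = (\tilde{\phi}\otimes\tilde{\phi})\circ\Delta_{\U(R)}$ from $R$ to all of $\U(R)$: the paper verifies it by hand on the PBW monomials $:a_1\cdots a_n:$ with $a_i\in R$, using that both coproducts are vertex algebra homomorphisms and hence commute with normally ordered products, whereas you package that same fact into a single appeal to the uniqueness clause of the universal property, observing that both sides are vertex algebra homomorphisms $\U(R)\to W\otimes W$ agreeing on $R$. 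Your version is slightly cleaner and avoids the explicit computation; it does quietly use that $\tilde{\phi}\otimes\tilde{\phi}$ is a homomorphism for the tensor-product vertex algebra structure and that the restriction of a vertex algebra homomorphism to $R$ is a Lie conformal algebra map (so the uniqueness clause applies), but both facts are immediate from the definitions recalled in the paper, so there is no gap. The counit argument is handled the same way in both proofs.
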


\begin{proof}
We only need to check that the induced vertex algebra homomorphism $\tilde{\phi}:\U(R) \rightarrow W$ given by the universal property of $\U(R)$ as a vertex algebra is also a coalgebra map.

Clearly $\Delta_{W}(\tilde{\phi}(\1))=(\tilde{\phi} \otimes \tilde{\phi})\Delta(\1)$. If $a \in R$, then $\tilde{\phi}(a)=\phi(a)$ is primitive in $W$ by hypothesis, and thus $\Delta_{W}(\tilde{\phi}(a))=\phi(a) \otimes \1 + \1 \otimes \phi(a) = (\phi \otimes \phi)\Delta(a)$.

Now let $a=\,:a_1 \cdots a_n:\, \in \U(R)$, where $a_i\in R$ for $i=1,...,n$ and $n \geq 1$. We have that
\begin{align*}
\Delta_{W}(\tilde{\phi}(a))&=\Delta_W(:\phi(a_1) \cdots \phi(a_n):)\\
&=\,:\Delta_W(\phi(a_1)) \cdots \Delta_W(\phi(a_n)):\\
&=\,:(\phi \otimes \phi)(\Delta a_1) \cdots (\phi \otimes \phi)(\Delta a_n):\\
&=(\tilde{\phi} \otimes \tilde{\phi})\Delta (a).
\end{align*}

Therefore $\tilde{\phi}$ preserves the coproduct, and analogous computations show that it also preserves the counit.
\end{proof}

Let us recall that a Lie conformal algebra $R$ may be seen as a Lie algebra $R_{Lie}$. By construction, $\U(R)$ coincides with $\U(R_{Lie})$ not only as a vector space but also as a coalgebra. In particular, $\U(R)$ is \textit{connected} (that is, its coradical is $\C \1$) and cocommutative. The classical Milnor-Moore theorem states that these conditions are necessary and sufficient for a bialgebra over a field of characteristic zero to be the universal enveloping algebra of its Lie algebra of primitives. Now we shall prove the analogous statement for vertex bialgebras.

\begin{theorem}[Milnor-Moore for vertex bialgebras]\label{Milnor_Moore}
If $U$ is a connected cocommutative vertex bialgebra, then $$U \simeq \U(\mathcal{P}(U))$$as vertex bialgebras.
\end{theorem}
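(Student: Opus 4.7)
The plan is to construct the canonical comparison map using the universal property in Proposition \ref{universal_property_vertex_bialgebras}, and then reduce its bijectivity to the classical Cartier--Kostant--Milnor--Moore structure theorem for connected cocommutative coalgebras over a field of characteristic zero. By Remark \ref{PUR_equals_R} we have $\mathcal{P}(\U(\mathcal{P}(U))) = \mathcal{P}(U)$, so the inclusion $\iota: \mathcal{P}(U) \hookrightarrow U$ is in particular a homomorphism of Lie conformal algebras from $\mathcal{P}(U)$ into the primitives of the vertex bialgebra $U$. Proposition \ref{universal_property_vertex_bialgebras} therefore yields a unique vertex bialgebra homomorphism $\tilde{\iota}: \U(\mathcal{P}(U)) \to U$ extending $\iota$, which will be the desired isomorphism. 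Since $\tilde{\iota}$ is automatically a morphism of vertex bialgebras, the only remaining task is to verify it is bijective.

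To prove bijectivity one forgets the vertex algebra structure. As a coalgebra, $\U(\mathcal{P}(U))$ coincides on the nose with the classical universal enveloping coalgebra of $\mathcal{P}(U)_{Lie}$ (as noted in the preliminaries), hence is connected and cocommutative over $\C$ with primitives equal to $\mathcal{P}(U)$. The target $U$ is connected and cocommutative by hypothesis, again with primitives $\mathcal{P}(U)$, and $\tilde{\iota}$ is a morphism of such coalgebras whose restriction to primitives is the identity on $\mathcal{P}(U)$. Invoking the classical fact that in characteristic zero any morphism of connected cocommutative coalgebras inducing an isomorphism on primitives is itself an isomorphism finishes the argument: concretely, $\tilde{\iota}$ preserves the coradical filtrations on both sides, and the induced map of associated gradeds is the identity of $S(\mathcal{P}(U))$ under the symmetrization isomorphisms $\mathrm{gr}\,\U(\mathcal{P}(U)) \cong S(\mathcal{P}(U)) \cong \mathrm{gr}\,U$, so $\tilde{\iota}$ is a filtered bijection.

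The main obstacle is precisely the appeal to this classical coalgebra result, which crucially uses characteristic zero and cocommutativity; it is the sole non-formal input of the proof. Once it is granted, all the vertex-theoretic content of the theorem is absorbed into Proposition \ref{universal_property_vertex_bialgebras} together with the identification of coalgebra structures $\U(R)=\U(R_{Lie})$, and no further conformal-algebra computations are required; in particular, the (potentially non-associative) normally ordered product on $U$ plays no role after the map $\tilde{\iota}$ has been produced.
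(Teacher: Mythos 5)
Your construction of the comparison map via Proposition \ref{universal_property_vertex_bialgebras} and your injectivity argument (a coalgebra map out of a connected coalgebra that is injective on primitives is injective, \cite[Lemma~5.3.3]{Montgomery}) coincide with the paper's proof. The gap is in surjectivity. The ``classical fact'' you invoke --- that a morphism of connected cocommutative coalgebras in characteristic zero inducing an isomorphism on primitives is an isomorphism --- is false at the level of coalgebras: take $D=\C[t]$ with the divided-power coproduct $\Delta(t^n)=\sum_k \binom{n}{k}t^k\otimes t^{n-k}$ and $C=\C\oplus\C t\subsetneq D$; both are connected and cocommutative, the inclusion is a coalgebra map restricting to the identity on $\mathcal{P}(C)=\C t=\mathcal{P}(D)$, yet it is not surjective. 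Equivalently, your claimed identification $\mathrm{gr}\,U\cong S(\mathcal{P}(U))$ of the associated graded of the coradical filtration is precisely the nontrivial content of Milnor--Moore and cannot be obtained from the coalgebra structure alone: for a bare connected cocommutative coalgebra, $\mathrm{gr}$ need only be a proper graded subcoalgebra of the symmetric coalgebra on the primitives.

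What rescues surjectivity --- and what the paper uses by running the argument of \cite[Thm.~5.6.5]{Montgomery} --- is the multiplicative structure: one shows by induction on the coradical filtration that $U$ is generated by $\mathcal{P}(U)$ under the normally ordered product, using that $\Delta$ is a homomorphism for $Y(\cdot,x)$ (in particular for the $(-1)$-st product) and that $\varphi$ intertwines these products with those building the PBW basis of $\U(\mathcal{P}(U))$. So your closing remark that ``the normally ordered product on $U$ plays no role after the map has been produced'' is exactly where the proof breaks; the product is indispensable for the surjectivity half.
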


\begin{proof}
Let $R=\mathcal{P}(U)$. Using the universal property proved in Proposition \ref{universal_property_vertex_bialgebras}, we may lift the identity morphism $R \rightarrow \mathcal{P}(U)$ to a vertex bialgebra map $$\varphi: \U(R) \rightarrow U.$$
The injectivity of $\varphi$ follows from \cite[Lemma~5.3.3]{Montgomery}, which states that whenever $f:C \rightarrow D$ is a coalgebra map with $C$ connected and $f|_{\mathcal{P}(C)}$ injective, then $f$ must be injective. On the other hand, its surjectivity can be obtained following verbatim the proof of the classical Milnor-Moore theorem given in \cite[Thm.~5.6.5]{Montgomery}, since it only uses the coalgebra structure of $\U(R)$ and its PBW basis, which is also the same as in $\U(R_{Lie})$.
\end{proof}

It is easy to check that both $\U$ and $\mathcal{P}$ are actually functorial, so the results obtained in Remark \ref{PUR_equals_R} and Theorem \ref{Milnor_Moore} may be summarized categorically.

\begin{proposition}\label{equiv_conformal_and_vertex_bialg}
The functors $\U$ and $\mathcal{P}$ define an equivalence of categories between the category of Lie conformal algebras and the category of connected cocommutative vertex bialgebras.
\end{proposition}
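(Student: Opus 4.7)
The plan is to verify that $\U$ and $\mathcal{P}$ are genuine functors, and then to assemble Remark \ref{PUR_equals_R} and Theorem \ref{Milnor_Moore} into a pair of mutually inverse natural isomorphisms.

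First I would dispatch functoriality. For $\mathcal{P}$, any vertex bialgebra homomorphism $g : U \to U'$ commutes with the coproducts, hence carries primitives to primitives, and its restriction is automatically a morphism of Lie conformal algebras. For $\U$, given a Lie conformal algebra map $f : R \to R'$, the composition $R \xrightarrow{f} R' \hookrightarrow \U(R')$ lands in $\mathcal{P}(\U(R')) = R'$ by Remark \ref{PUR_equals_R}, so Proposition \ref{universal_property_vertex_bialgebras} delivers a unique vertex bialgebra morphism $\U(f) : \U(R) \to \U(R')$ extending $f$. Uniqueness of the lift immediately yields $\U(\mathrm{id}) = \mathrm{id}$ and $\U(g \circ f) = \U(g) \circ \U(f)$. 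Note also that $\U(R)$ is always connected and cocommutative, since as a coalgebra it agrees with the classical $\U(R_{Lie})$, so $\U$ really does land in the target category.

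Next, Remark \ref{PUR_equals_R} gives the identity $\mathcal{P}(\U(R)) = R$, and this identification is natural in $R$: by construction $\mathcal{P}(\U(f))$ is the unique Lie conformal morphism extending $f$, namely $f$ itself. In the other direction, Theorem \ref{Milnor_Moore} supplies an isomorphism $\varphi_U : \U(\mathcal{P}(U)) \to U$ for every connected cocommutative vertex bialgebra $U$. Naturality of $\varphi_U$ is the only step that requires a small argument: for a morphism $g : U \to U'$, both $g \circ \varphi_U$ and $\varphi_{U'} \circ \U(\mathcal{P}(g))$ are vertex bialgebra homomorphisms $\U(\mathcal{P}(U)) \to U'$ whose restrictions to $\mathcal{P}(U)$ coincide with $\mathcal{P}(U) \xrightarrow{\mathcal{P}(g)} \mathcal{P}(U') \hookrightarrow U'$, so the uniqueness clause of Proposition \ref{universal_property_vertex_bialgebras} forces them to agree.

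I expect the main obstacle to be precisely this naturality of the Milnor-Moore isomorphism; once it is settled, the two natural isomorphisms $\mathcal{P} \circ \U \cong \mathrm{id}$ and $\U \circ \mathcal{P} \cong \mathrm{id}$ assemble into the claimed equivalence. No genuinely new vertex/conformal input is needed beyond what is already available: the proof is categorical bookkeeping organised around the universal property of $\U(R)$.
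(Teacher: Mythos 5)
Your proposal is correct and follows exactly the route the paper intends: the paper gives no written proof beyond the remark that $\U$ and $\mathcal{P}$ are functorial and that Remark \ref{PUR_equals_R} and Theorem \ref{Milnor_Moore} supply the two natural isomorphisms, and your argument — including the naturality check via the uniqueness clause of Proposition \ref{universal_property_vertex_bialgebras} — is a faithful and complete expansion of that sketch.
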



\section{Topological vertex coalgebras}

The next step in our program is to determine the correct structure to consider in the dual space $\U(R)^*$ of the universal enveloping vertex algebra of a Lie conformal algebra $R$. In order to do that, in this section we shall study the dual spaces of arbitrary vertex algebras, which will lead us to the concept of \textit{topological vertex coalgebra}.

Let us recall that if $U$ is any vector space, we can endow its dual space $V=U^*$ with the linearly compact topology, where a basis of open neighborhoods of zero in $V$ is given by the annihilators $W^{\perp}$ of the finite-dimensional subspaces $W$ of $U$. We will call vector spaces equipped with this topology \textit{linearly compact spaces}. Note that for any finite-dimensional vector space $V$ this is the discrete topology.  

This gives us a contravariant functor $^*$ from the category $\mathcal{V}ec$ of vector spaces to the category $\mathcal{LCV}ec$ of linearly compact spaces (where the morphisms are the continuous linear maps). This functor has an inverse, the continuous dual, that takes each linearly compact space $V$ to its space $V'$ of all continuous linear functionals $f: V \rightarrow \C$, and therefore the functor $^*$ is an antiequivalence of categories.

If $V_1$ and $V_2$ are two linearly compact spaces, we can define their \textit{completed tensor product}  $V_1 \hat{\otimes} V_2$ as the completion of $V_1 \otimes V_2$ with respect to the tensor product topology, where a basis of open neighborhoods of zero is given by the spaces $W_1 \otimes V_2 \, + \, V_1 \otimes W_2$ with $W_i$ open neighborhood of zero in $V_i$ for $i=1,2$. This allows us to consider the monoidal structure on the category $\mathcal{LCV}ec$ where the monoidal product is $\hotimes$ and the monoidal unit is $\C$. It is well-known that $(U_1 \otimes U_2)^* \simeq U_1^* \hat{\otimes} U_2^*$ for all vector spaces $U_1, U_2$ (c.f. \cite{Dieudonne}), and therefore the dualization functor * preserves the monoidal structure.

Now let us take $U$ to be a vertex algebra instead of an arbitrary vector space. Since for each $n \in \Z$ we have the $n$-th product $\cdot _{(n)} \cdot : U \otimes U \rightarrow U$, its dual map (which is continuous) will be called \textit{$n$-th coproduct}, and we shall denote it by
\begin{equation}\label{nth_products_def_U_dual}
\Delta_{n}: V \rightarrow V \hat{\otimes} V.
\end{equation}
We can collect all of these maps in a generating series:
\begin{equation}\label{Ydual_def_U_dual}
\Ydual (x)=\sum_{n \in \Z} \Delta_{n}x^{-n-1}.
\end{equation}
Due to the fact that $V \hotimes V = (U \otimes U)^*$, we can write for $v \in V$ and $u, u' \in U$ $$(\Ydual(x)v)(u \otimes u') = v(Y(u,x)u').$$
Similarly, the vacuum vector $\1$ induces a continuous \textit{covacuum map}:
\begin{align}
c: V &\rightarrow \C \label{covacuum_def_U_dual}\\
f &\mapsto f(\1). \nonumber
\end{align}

We have arrived to a structure very similar to the one that Hubbard has called \textit{vertex coalgebra} in \cite{Hubbard_vertex_coalgebras}. Nonetheless, there are some important differences, mostly due to the topological aspects that we must consider, which do not appear in Hubbard's work because he studied graded dual spaces.

\begin{definition}
A \textit{topological vertex coalgebra} consists of a linearly compact space $V$ together with a continuous linear map
\begin{align}
\Ydual (x): V &\rightarrow (V \hat{\otimes} V)[[x,x^{-1}]]\label{Y_dual_def}\\
v &\mapsto \Ydual (x)v=\sum_{n \in \Z} \Delta_{n}(v)x^{-n-1}. \nonumber
\end{align}
and a continuous linear map $c: V \rightarrow \C$ called \textit{covacuum map}, satisfying the following axioms:
\begin{enumerate}
\item Convergence: 
\begin{equation}\label{punctual_convergence}
\lim_{n\to\infty}\Delta_{n}(v) = 0 \quad \text{uniformly for }v\in V.
\end{equation} 
\item Left counit: For all $v \in V$,
 \begin{equation}\label{Left_counit}
 (c \hotimes Id)\,\Ydual (x)v = v.
 \end{equation}
\item Cocreation: For all $v \in V$, $(Id \hotimes c) \Ydual (x)v$ is a holomorphic formal power series in $x$ with coefficients in $V$ and
 \begin{equation}
 (Id \hotimes c) \Ydual (x)v \big|_{x=0}= v.\label{Cocreation}
 \end{equation}
\item Co-Jacobi identity:
\begin{align}
x_0^{-1}&\delta\left(\frac{x_1-x_2}{x_0}\right)(Id \hotimes \Ydual(x_2)) \Ydual (x_1) - x_0^{-1}\delta\left(\frac{x_2-x_1}{-x_0}\right)(\tau \hotimes Id)  \label{CoJacobi} \\
&(Id \hotimes \Ydual(x_1)) \Ydual (x_2)=x_2^{-1}\delta\left(\frac{x_1-x_0}{x_2}\right)(\Ydual(x_0) \hotimes Id)\Ydual(x_2),\nonumber
\end{align}
where $\tau: V \hotimes V \rightarrow V \hotimes V$ is the flip operator $a \hotimes b \mapsto b \hotimes a$.
\end{enumerate}
\end{definition}

\begin{remark}\label{good_definition_TVC}
There are several aspects of this definition that require some clarification. Firstly, since the product topology is the topology of punctual convergence, the convergence axiom is equivalent to requiring that the singular part of the formal series $\Ydual(x)(v)$, that is
\begin{equation}\label{convergence_series}
\sum_{n=0}^{\infty} \Delta_{n}(v)\,x^{-n-1} \in (V \hotimes V)[[x^{-1}]],
\end{equation}
be uniformly convergent for all $v \in V$ when we set the product topology over $(V \hotimes V)[[x^{-1}]]=\prod_{n \geq 0}(V \hotimes V)x^{-n-1}$, where each factor is homeomorphic to $V \hotimes V$. This may be interpreted as a sort of ``generalized truncation'' axiom, changing the truncation of the series (\ref{convergence_series}) for its convergence. In a similar fashion, the continuity of $\Ydual(x)$ is equivalent to asking for all the $n$-th coproducts to be continuous maps.

On the other hand, the axiom of convergence plays the same role for topological vertex coalgebras that the truncation axiom plays for vertex algebras: it allows us to multiply the formal power series that appear in the co-Jacobi identity. Indeed, for any $l,t,j \in \Z$, comparing of the coefficients of $x_{0}^{-l-1}x_{1}^{-t-1}x_{2}^{-j-1}$ in each side of the co-Jacobi identity gives the formula
\begin{align}
\sum_{i=0}^{\infty}&(-1)^{i}\binom{l}{i}(Id \hotimes \Delta_{j+i})\Delta_{t+l-i}-(\tau\, \hat{\otimes}\, Id)\sum_{i=0}^{\infty}(-1)^{l+i}\binom{l}{i}(Id \hotimes \Delta_{t+i})\Delta_{j+l-i}\nonumber\\
&=\sum_{i=0}^{\infty}\binom{t}{i}(\Delta_{l+i} \hotimes Id)\Delta_{t+n-i}.\label{coJacobi_coefficients}
\end{align}
We need to check that fixing $v \in V$, all of these sums applied to $v$ converge in $V \hotimes V \hotimes V$. Let us take some open neighborhood of zero $W$ in this space; by definition, we may assume that is has the form $W=W_1 \hotimes V \hotimes V + V \hotimes W_2$ with $W_1$ and $W_2$ open in $V$ and $V \hotimes V$ respectively. Now (\ref{punctual_convergence}) allows us to choose some $N \geq 0$ such that $\Delta_{i}(w) \in W_2$ for all $i \geq N$ and for all $w \in V$. But then $$\sum_{i=\text{max}(0,N-j)}^{\infty}(-1)^{i}\binom{l}{i}(Id \hotimes \Delta_{j+i})\Delta_{t+l-i}(v) \in V \hotimes W_2 \subseteq W,$$so the first term of (\ref{coJacobi_coefficients}) is a convergent series, and we can proceed\linebreak analogously with the other two terms. As a last comment, note that this proof relies heavily on the fact that the convergence (\ref{punctual_convergence}) is uniform.
\end{remark}

In order to define the category of topological vertex coalgebras, which we shall denote by $\mathcal{TVC}$, we need the following definition.

\begin{definition}
A \textit{homomorphism} of topological vertex coalgebras is a continuous linear map $f: V \rightarrow \tilde{V}$ between topological vertex coalgebras such that $$c_{\tilde{V}}  f = c_{V} \quad \text{and} \quad \Ydual_{\tilde{V}}(x) f=(f \hotimes f) \Ydual_{V}(x),$$ with $f \hotimes f$ extended coefficient-wise.
\end{definition}

We are now ready to prove the duality between vertex algebras and topological vertex coalgebras.

\begin{theorem}\label{first_vertex_Cartier}
If $U$ is a vertex algebra, then its dual space $V=U^*$ is a topological vertex coalgebra with the linearly compact topology and the maps $\Ydual(x)$ and $c$ defined by (\ref{Ydual_def_U_dual}) and (\ref{covacuum_def_U_dual}).

Conversely, if $V$ is topological vertex coalgebra, then its continuous dual space $U=V'$ is a vertex algebra.

Moreover, these correspondences define an antiequivalence of categories between $\mathcal{VA}$ and $\mathcal{TVC}$.
\end{theorem}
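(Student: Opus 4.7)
The plan is to exploit the fact that the plain dualization functor $^*: \mathcal{V}ec \to \mathcal{LCV}ec$ is already a monoidal antiequivalence with inverse given by the continuous dual $'$, so that $(V \hotimes V)' \simeq V' \otimes V'$ and $(V')^* \simeq V$ naturally. Everything reduces to verifying that the vertex-algebra axioms on $U$ transfer to the topological-vertex-coalgebra axioms on $V=U^*$ (and back) under this monoidal duality, and that morphisms correspond.

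For the first direction, let $U$ be a vertex algebra and $V=U^*$. The $n$-th coproducts $\Delta_n$ are continuous since they are linear duals, and the convergence axiom (\ref{punctual_convergence}) translates the truncation axiom: any neighborhood $W$ of $0$ in $V \hotimes V$ contains the annihilator $S^\perp$ of some finite-dimensional $S \subseteq U \otimes U$, and by truncation we can find $N$ such that $u_{(n)}u' = 0$ for $n \geq N$ and all $u \otimes u'$ in a basis of $S$, whence $\Delta_n(v) \in S^\perp \subseteq W$ for all $v \in V$ and $n \geq N$. The left counit axiom follows from $((c \hotimes Id)\Ydual(x)v)(u) = v(Y(\1,x)u) = v(u)$, and the cocreation axiom from $((Id \hotimes c)\Ydual(x)v)(u) = v(Y(u,x)\1) = v(e^{x\Del}u)$, which is holomorphic with value $v(u)$ at $x=0$. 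The co-Jacobi identity is the dual of the Jacobi identity once we identify $(Id \hotimes \Ydual(x_2))\Ydual(x_1)$ as the dual of $u \otimes v \otimes w \mapsto Y(u,x_1)Y(v,x_2)w$ and recognize the flip $\tau$ as implementing the exchange of the first two tensor factors in $U^{\otimes 3}$; well-definedness of the resulting series in $V\hotimes V \hotimes V$ is exactly the content of Remark \ref{good_definition_TVC}.

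For the second direction, let $V$ be a topological vertex coalgebra and $U=V'$. Define $Y_n: U \otimes U \to U$ as the continuous dual of $\Delta_n$ via $(V \hotimes V)' \simeq U \otimes U$, set $Y(u,x)u' = \sum_n Y_n(u \otimes u') x^{-n-1}$, and $\1 = c \in V' = U$. The truncation axiom for $U$ is the key place where the uniform convergence in (\ref{punctual_convergence}) is used: any $u \otimes u' \in (V \hotimes V)'$ vanishes on some open subspace $W$ of $V \hotimes V$, and uniformity produces an $N$ with $\Delta_n(v) \in W$ for all $v$ and $n \geq N$, yielding $u_{(n)}u' = 0$ in $V' = U$ for $n \geq N$. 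The left unit, creation and Jacobi axioms follow from the corresponding coalgebra axioms by the same pairing computations as above, read backwards. Together with the observation that the natural isomorphisms $U \simeq (U^*)'$ and $V \simeq (V')^*$ preserve all the structure involved, this shows that the constructions are mutually inverse on objects.

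Finally, for the antiequivalence of categories, a vertex algebra homomorphism $f: U \to \tilde U$ has a continuous dual $f^*: \tilde V \to V$ which automatically preserves $c$ and $\Ydual(x)$ because these are defined as duals of $\1$ and $Y$; the converse is the symmetric statement starting from a morphism of $\mathcal{TVC}$. Naturality is inherited from the underlying antiequivalence of $\mathcal{V}ec$ and $\mathcal{LCV}ec$, so $^*$ and $'$ restrict to mutually inverse antiequivalences between $\mathcal{VA}$ and $\mathcal{TVC}$. The main obstacle is the careful bookkeeping of the topological convergence: one must consistently translate the algebraic finiteness in $U \otimes U$ coming from truncation into topological uniform convergence in $V \hotimes V$ (and back), and verify that the formal series in several variables that appear in the co-Jacobi identity actually converge in $V^{\hotimes 3}$, since this is the only place where the topological hypotheses do non-trivial work beyond pure duality.
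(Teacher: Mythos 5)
Your proposal is correct and follows essentially the same route as the paper: dualize through the monoidal antiequivalence of $\mathcal{V}ec$ and $\mathcal{LCV}ec$, transfer the left unit/creation/Jacobi axioms by the evaluation pairing, and concentrate the real work in the equivalence between truncation in $U$ and uniform convergence of $\Delta_n$ in $V\hotimes V$, which you handle via annihilators of finite-dimensional subspaces exactly as the paper does with its cofinal filtration $C_N=A_N\cap B_N$. The only nitpick is that a basis of a finite-dimensional $S\subseteq U\otimes U$ need not consist of pure tensors, so one should first decompose each basis vector into finitely many pure tensors before invoking truncation; this is a trivial repair and does not affect the argument.
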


\begin{proof}
Let $U$ be a vertex algebra. We already know that $V=U^*$ is a linearly compact space, and we also have that for each $n \in \Z$, $\Delta_n$ is continuous, which implies that $\Ydual(x)$ is continuous when we consider the space $(V \hotimes V)[[x,x^{-1}]]$ as the product $\prod_{n \in \Z}(V \hotimes V)x^{-n-1}$ with the product topology, where each factor is homeomorphic to $V \hotimes V$.

In order to prove that $V$ is a topological vertex coalgebra, it remains to check the convergence axiom as well as conditions (\ref{Left_counit}) to (\ref{CoJacobi}). The latter follow quite directly from their respective axioms in the definition of vertex algebra, as Hubbard shows in \cite{Hubbard_vertex_operator_coalg}. For example, the left unit and counit axioms follow from one another because if $v \in V$ and $u \in U$, then
\begin{align*}
((c \hotimes Id)\Ydual(x)v)(u) &= (\Ydual(x)v) (\1 \otimes u)\\
&=v(Y(\1,x)u),
\end{align*}
and therefore if either $(c \hotimes Id)\Ydual(x)$ or $Y(\1,x)$ is the identity, so is the other.

Therefore we only need to prove the convergence axiom (\ref{punctual_convergence}). We begin by rephrasing the truncation axiom: it is equivalent to asking for the increasing sequence of spaces $$A_{N}:=\langle \{ u \otimes v \in U \otimes U: u_{(n)}v=0 \text{ for all }n \geq N \} \rangle$$to be a filtration of $U \otimes U$. Let us choose any increasing filtration $\{B_{N}\}_{N \geq 0}$ of $U \otimes U$ by finite-dimensional spaces, and define $C_{N}=A_{N} \cap B_{N}$ for all $N \geq 0$, which is still an increasing filtration of $U \otimes U$. Now its sequence of annihilators $\{W_{N}\}_{N \geq 0}$ is a basis of open neighborhoods of zero in $V \hotimes V$, with the property that for each $N \geq 0$, $$\Delta_{n}(f) \in W_{N} \quad \forall f \in V, \forall n \geq N,$$ which is exactly what (\ref{punctual_convergence}) means.

Reciprocally, if $V$ is a topological vertex algebra and we define $U=V'$, we know that the continuous dual functor sends the maps $\Delta_{n}$ for $n \in \Z$ and $c$ to some linear maps 
\[
\cdot_{(n)}\cdot : U \otimes U \rightarrow U \quad \text{and} \quad \eta: \C \rightarrow U.
\]
Let $\1=\eta(1)$ and $Y(u,x)v=\sum_{n \in \Z}u_{(n)}v\,x^{-n-1}$ for all $u,v \in U$. Now once again, all the axioms of a vertex algebra follow directly from their\linebreak corresponding ones of topological vertex coalgebras, except maybe the truncation condition.

Let $u,v \in U$. Then $u \otimes v \in U \otimes U \simeq (V \hotimes V)'$, so Ker$(u \otimes v)$ is a subspace of $V \hotimes V$ of codimension at most 1, and is therefore an open neighborhood of zero. Now we can use (\ref{punctual_convergence}): there exists $N \geq 0$ such that $\Delta_{m}(f) \in \text{Ker}(u \otimes v)$ for all $f \in V$ and all $n \geq N$, which is easily seen to be equivalent to the truncation axiom for $U$.

Finally, if $f: U \rightarrow \tilde{U}$ is a linear map between vertex algebras, $V=U^*$ and $\tilde{V}=\tilde{U}^*$, then $f^*:\tilde{V} \rightarrow V$ is continuous linear map and for all $u,u'\in U$ and $v \in \tilde{V}$,
\begin{align*}
(\Ydual_{V}(x)(f^*v))(u \otimes u')&= (f^*v) (Y_{U}(u,x)u')\\
&=v(f(Y_{U}(u,x)u'))
\end{align*}
and
\begin{align*}
((f^* \hotimes f^*)\Ydual_{\tilde{V}}(x)v)(u \otimes u')&=(\Ydual_{\tilde{V}}(x)v)(f(u) \otimes f(u'))\\
&=v(Y_{\tilde{U}}(f(u),x)f(u')).
\end{align*}
Similarly, $f(\1) = \1 \otimes \1$ if and only if $c_{\tilde{V}}\,f^*=c_{V}$, so $f$ is a morphism in $\mathcal{VA}$ if and only if $f^*$ is a morphism in $\mathcal{TVC}$. Therefore the dual and continuous dual functors are well-defined when restricted to these smaller categories, establishing the antiequivalence we wanted.
\end{proof}

We will now use this result to endow $\mathcal{TVC}$ with a monoidal structure.

\begin{proposition}\label{TVC_monoidal}
Let $V_1$ and $V_2$ be topological vertex coalgebras. Then the linearly compact space $V_1 \hotimes V_2$ can be given a topological vertex coalgebra structure by taking for all $v_1 \in V_1$ and $v_2 \in V_2$
\begin{align}
\Ydual_{V_1 \hotimes V_2}(x)(v_1 \hotimes v_2)&=(Id \hotimes \tau \hotimes Id)\, (\Ydual_{V_1}(x)(v_1) \hotimes \Ydual_{V_2}(x)(v_2)),\label{Ydual_tensor_prod}\\
c_{V_1 \hotimes V_{2}}(v_1 \hotimes v_2)&= c_{V_{1}}(v_1)\, c_{V_2}(v_2).\label{covacuum_tensor_prod}
\end{align}
In particular, $\mathcal{TVC}$ is a monoidal subcategory of $\mathcal{LCV}ec$.
\end{proposition}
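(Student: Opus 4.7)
The plan is to avoid a direct axiomatic check on $V_1 \hotimes V_2$ and instead transport the already-known monoidal structure on $\mathcal{VA}$ across the antiequivalence of Theorem \ref{first_vertex_Cartier}. By that theorem, every $V_i \in \mathcal{TVC}$ is the continuous dual of a vertex algebra $U_i = V_i'$, and since $\mathcal{VA}$ is monoidal, $U_1 \otimes U_2$ is a vertex algebra; hence by Theorem \ref{first_vertex_Cartier} again, the linearly compact space $(U_1 \otimes U_2)^*$ carries a canonical topological vertex coalgebra structure. Composing with the standard isomorphism $(U_1 \otimes U_2)^* \simeq U_1^* \hotimes U_2^* \simeq V_1 \hotimes V_2$ of linearly compact spaces transports this structure onto $V_1 \hotimes V_2$; the convergence axiom in particular is inherited for free, since it was verified in Theorem \ref{first_vertex_Cartier} for any dual of a vertex algebra.

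The second step is to identify the transported $\Ydual$ and covacuum with the explicit formulas \eqref{Ydual_tensor_prod} and \eqref{covacuum_tensor_prod}. For the covacuum map this is immediate: the dual of $\1_{U_1 \otimes U_2} = \1_{U_1} \otimes \1_{U_2}$ evaluated on $v_1 \hotimes v_2$ factors as $v_1(\1_{U_1})\,v_2(\1_{U_2}) = c_{V_1}(v_1)\, c_{V_2}(v_2)$. For $\Ydual$, I would evaluate on a simple tensor $(u_1 \otimes u_2) \otimes (u_1' \otimes u_2')$, apply the product formula $Y_{U_1 \otimes U_2}(u_1 \otimes u_2, x) = Y_{U_1}(u_1, x) \otimes Y_{U_2}(u_2, x)$ built into the tensor product of vertex algebras, and verify that the resulting pairing equals $(\Ydual_{V_1}(x)v_1)(u_1 \otimes u_1')\cdot(\Ydual_{V_2}(x)v_2)(u_2 \otimes u_2')$, which is precisely what the right-hand side of \eqref{Ydual_tensor_prod} produces once the middle two factors are swapped by $Id \hotimes \tau \hotimes Id$.

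Once $V_1 \hotimes V_2$ is known to lie in $\mathcal{TVC}$, the claim that $\mathcal{TVC}$ is a monoidal subcategory of $\mathcal{LCV}ec$ is essentially automatic. The unit $\C$ belongs to $\mathcal{TVC}$ as the dual of the trivial vertex algebra $\C$, and the associativity and unit isomorphisms in $\mathcal{LCV}ec$ arise as duals of the corresponding isomorphisms of vertex algebras (which are morphisms in $\mathcal{VA}$) and hence are morphisms in $\mathcal{TVC}$ by Theorem \ref{first_vertex_Cartier}. The main obstacle, and the only place requiring real care, is the bookkeeping needed to track the flip $\tau$ in \eqref{Ydual_tensor_prod}: it is forced by the reshuffling of tensor factors in dualizing $(A \otimes B) \otimes (A \otimes B) \simeq (A \otimes A) \otimes (B \otimes B)$, and is straightforward in principle but easy to misplace when the computation is done by hand.
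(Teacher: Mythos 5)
Your proposal is correct and follows essentially the same route as the paper: dualize $V_i$ to vertex algebras $U_i=V_i'$, observe that $(U_1\otimes U_2)^*\simeq V_1\hotimes V_2$ is a topological vertex coalgebra by Theorem \ref{first_vertex_Cartier}, and then verify by evaluation on simple tensors that the transported structure maps are exactly \eqref{Ydual_tensor_prod} and \eqref{covacuum_tensor_prod}, with the flip $\tau$ accounting for the reordering of tensor factors. The paper performs precisely this computation (treating the covacuum case as analogous), so there is nothing to add.
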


\begin{proof}
Let $U_i=V_i'$ for $i=1,2$. Since each $U_i$ is a vertex algebra, we can consider their tensor product $U_1 \otimes U_2$, which is again a vertex algebra. Now its dual space is isomorphic to $V_1 \hotimes V_2$, and therefore by the preceding theorem it is a topological vertex coalgebra. Thus we only need to check that its structure maps are given by (\ref{Ydual_tensor_prod}) and (\ref{covacuum_tensor_prod}). For all $v_i \in V_i$ and $u_i, u_i' \in U_i$ ($i=1,2$), we have

\begin{align*}
\Ydual_{V_1 \hotimes V_2}&(x)(v_1 \hotimes v_2)(u_1 \otimes u_2 \otimes u_1' \otimes u_2')=\\
&=(v_1 \hotimes v_2) (Y_{U_1 \otimes U_2}(u_1 \otimes u_2,x)(u_1' \otimes u_2'))\\
&=(v_1 \hotimes v_2)(Y_{U_1}(u_1,x) u_1' \otimes Y_{U_2}(u_2,x)u_2')\\
&=v_1(Y_{U_1}(u_1,x) u_1) v_2(Y_{U_2}(u_2,x) u_2')\\
&=\Ydual_{V_1}(x)(v_1)(u_1 \otimes u_1') \Ydual_{V_2}(x)(v_2)(u_2 \otimes u_2')\\
&=(\Ydual_{V_1}(x)(v_1) \hotimes \Ydual_{V_2}(x)(v_2) )(u_1 \otimes u_1' \otimes u_2 \otimes u_2')\\
&=((Id \hotimes \tau \hotimes Id)(\Ydual_{V_1}(x)(v_1) \hotimes \Ydual_{V_2}(x)(v_2)))(u_1 \otimes u_2 \otimes u_1' \otimes u_2').
\end{align*}

The proof of (\ref{covacuum_tensor_prod}) is analogous.
\end{proof}

We shall close this section with a brief discussion on graded vertex algebras and coalgebras. A vertex algebra $U$ is called \textit{graded} if it has a $\Z$-grading $U=\bigoplus_{t \in \Z}U_{(t)}$ such that $U_{(t)}=0$ for $t << 0$ and for all $u \in U_{(t)}$, $v \in U_{(s)}$ and $n \in \Z$, we have $u_{(n)}v \in U_{(t+s-n-1)}$. Let us consider the graded dual space $U^{o} = \bigoplus_{t \in \Z}U_{(t)}^*$. Let $c^{o}: U^{o} \rightarrow \C$ be the double dual of $\1$ and define for any $n \in \Z$ the map $\Delta_{n}^{o}: U^{o} \rightarrow U^{o} \otimes U^{o}$ by $\Delta_{n}^{o}(v)(u \otimes u')=v(u_{(n)}u')$ for all $v \in U^{o}$ and $u,u' \in U$. It was proven in \cite{Hubbard_vertex_coalgebras} that $U^{o}$ with these maps is a \textit{graded vertex coalgebra}, which is defined by a set of axioms similar to those of topological vertex coalgebras but exempt of the topological aspects. In particular, the axiom of convergence is replaced by the following grading condition: for all $v \in U^{o}_{(t)}$,
\begin{equation}\label{grading_condition_Hubbard}
\Delta_{n}^{o}(v) \in (U^{o} \otimes U^{o})_{(t + n + 1)}.  
\end{equation}
Here the grading in $U^{o} \otimes U^{o}$ is given by declaring $\text{deg}(v_{1} \otimes v_{2})=k$ whenever $v_{1} \in U_{(k_{1})}^{o}$, $v_{2} \in U_{(k_{2})}^{o}$ and $k_1 + k_2 = k$.

On the other hand, if we take the full dual $U^*$ and the continuous dual maps $\Delta_{n}: V \rightarrow V \hotimes V$ and $c: V \rightarrow \C$ defined by (\ref{nth_products_def_U_dual}) and (\ref{covacuum_def_U_dual}), we obtain a topological vertex coalgebra by Theorem \ref{first_vertex_Cartier}. Now $U^*$ is the completion of the discrete space $U^{o}$, and the maps $\Delta_{n}$ and $c$ are continuous extensions of $\Delta_{n}^{o}$ and $c^{o}$ respectively. It is illustrating to see how the convergence axiom for $U^*$ can be deduced in this case directly from the grading condition (\ref{grading_condition_Hubbard}) for $U^{o}$. Let $T\in \Z$ such that $U_{t}=0$ for all $t < T$, and define a decreasing filtration by open subspaces $\{W_{k}\}_{k \geq T}$ of $V \hotimes V$ as $W_{k}= \prod_{t \geq k} (U^{o} \otimes U^{o})_{(t)}$. Now given any $k \in \Z$, we may take $N=k-T-1$, and then the grading condition implies that $\Delta_{n}(V) \subseteq W_k$ for all $n \geq N$, proving thus the axiom of convergence.

\section{Vertex bicoalgebras and Cartier duality}\label{subsection_vertex_coalg}

We will now continue the program of studying the dual spaces of vertex algebras by examining the case of vertex bialgebras. Our goal is to prove a version of Cartier duality for vertex algebras.

We begin by recalling the classical Cartier duality. If we restrict our attention to coalgebras instead of arbitrary vector spaces, we find that for any given (coassociative counital) coalgebra $(U,\Delta, \epsilon)$ the dualization functor sends $$\Delta: U \rightarrow U \otimes U \quad \text{and} \quad \epsilon: U \rightarrow \C$$ to the continuous linear maps 
\begin{equation*}
\Delta^*: U^* \hotimes U^* \rightarrow U^*  \quad \text{and} \quad \epsilon^*: \C \rightarrow U^*,
\end{equation*}
and in fact $(U^*,\Delta^*,\epsilon^*)$ is an (associative unital) algebra in the monoidal\linebreak category $\mathcal{LCV}ec$, namely, a \textit{linearly compact algebra}. Conversely, the con-\linebreak tinuous dual functor sends any linearly compact algebra to a coalgebra. Thus the dualization functor restricted to the categories $\mathcal{C}lg$ of coalgebras and $\mathcal{LCA}lg$ of linearly compact algebras is once again an antiequivalence of categories, and this is precisely the statement known as the first version of Cartier duality \cite{Hazewinkel_formalgroups}: $$\mathcal{C}lg \simeq \mathcal{LCA}lg^{op}.$$

Naturally, in the last paragraph we could have considered algebras in $\mathcal{V}ec$ instead of coalgebras, and we would have arrived at linearly compact coalgebras instead of algebras. 
Combining both reasonings, we obtain the full version of Cartier duality, relating the categories of bialgebras in $\mathcal{V}ec$ and linearly compact bialgebras: $$\mathcal{B}lg \simeq \mathcal{LCB}lg^{op}.$$

There is one further step that we could take that consists in replacing bialgebras by Hopf algebras, but we will not make use of it in this work, mainly because the notion of Hopf vertex algebra is yet to be defined.

We now return to the realm of vertex algebras. Since a vertex bialgebra $U$ is in particular a coalgebra, we know that its dual $V=U^*$ is a linearly compact algebra. Moreover, $V$ is a commutative algebra whenever $U$ is a cocommutative coalgebra (e.g., when $U=\U(R)$).

This leads us to the following dualization of the notion of vertex bialgebra.

\begin{definition}
A \textit{vertex bicoalgebra} is a topological vertex coalgebra $V$ together with a structure of algebra in $\mathcal{LCV}ec$ such that its algebra maps $$\cdot: V \hotimes V \rightarrow V \quad \text{and} \quad \eta: \C \rightarrow V$$ are topological vertex coalgebra homomorphisms. That is, a vertex bicoalgebra is simply an algebra in the monoidal category $\mathcal{TVC}$. 
\end{definition}

Let us recall that bialgebras may equivalently be defined as being coalgebras over the category of algebras or algebras over the category of coalgebras, since these structures are dual to each other. For vertex bialgebras such a property is not to be expected since it is not clear how $U((x))$ can be given a coalgebra structure such that the map $Y$ becomes a coalgebra map.

Nonetheless, there is a way to obtain such a result for vertex bicoalgebras. The first step is to note that if $V$ is a vertex bicoalgebra, the image of the map $\Ydual(x)$ lies in the subspace of $(V \hotimes V) [[x,x^{-1}]]$ consisting of series of the form $\sum_{n \in \Z}a_n x^{-n-1}$ whose coefficients satisfy that $a_n \rightarrow 0$ in $V \hotimes V$ when $n \rightarrow \infty$. We denote this space as $(V \hotimes V) [[x,x^{-1}]]^{conv}$, since it is formed by formal series whose singular part is point-wise convergent.

\begin{lemma}
If $W$ is a linearly compact algebra, then $W[[x,x^{-1}]]^{conv}$ is an algebra, containing $W((x))$ as a subalgebra. If $W$ is finite-dimensional, both algebras coincide.
\end{lemma}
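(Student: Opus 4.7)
The plan is to define the product on $W[[x,x^{-1}]]^{conv}$ by the usual Cauchy-style formula and then to show that the topology of $W$ guarantees both convergence of the defining series and the decay property required for the result to remain in $W[[x,x^{-1}]]^{conv}$. Explicitly, for $a(x)=\sum_{n} a_n x^{-n-1}$ and $b(x)=\sum_{m} b_m x^{-m-1}$ in $W[[x,x^{-1}]]^{conv}$, I set $a(x)\,b(x)=\sum_{k} c_k\, x^{-k-1}$ with $c_k=\sum_{n \in \Z} a_n\, b_{k-1-n}$, which is a priori a doubly-infinite sum in $W$.

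The main step is to show that this sum converges in $W$. Continuity of the multiplication $W \hotimes W \to W$ translates into the following assertion about the linearly compact topology: for every open subspace $\Omega \subseteq W$ there exist open subspaces $\Omega_1, \Omega_2 \subseteq W$ with $\Omega_1 \cdot W \subseteq \Omega$ and $W \cdot \Omega_2 \subseteq \Omega$. The convergence hypotheses $a_n \to 0$ and $b_m \to 0$ then produce some $N$ with $a_n \in \Omega_1$ and $b_n \in \Omega_2$ for every $n \geq N$. For fixed $k$, every term $a_n\, b_{k-1-n}$ with either $n \geq N$ or $k-1-n \geq N$ already lies in $\Omega$, and the remaining indices form the finite range $k-N \leq n \leq N-1$. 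Since $W$ is complete (linear compactness implies completeness) and $\Omega$ is a closed subspace, the partial sums form a Cauchy net and the series for $c_k$ converges.

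Next I verify $c_k \to 0$ as $k \to \infty$. Fixing $\Omega$ and the associated $N$ as before, as soon as $k \geq 2N$ the dichotomy ``$n \geq N$ or $k-1-n \geq N$'' covers every $n \in \Z$, so every term of $c_k$ lies in $\Omega$; closedness of $\Omega$ gives $c_k \in \Omega$, which proves the claim. Associativity and unitality then follow formally from the corresponding structure on $W$, via entirely analogous two-sided tail estimates that justify the reordering of the convergent doubly-infinite sums involved. The unit is $1_W$ viewed as a constant element. The inclusion $W((x)) \subseteq W[[x,x^{-1}]]^{conv}$ is tautological, and its stability under the product is the classical count: if $a_n=0$ for $n \geq N_1$ and $b_m=0$ for $m \geq N_2$, then nonzero contributions to $c_k$ require $n < N_1$ and $k-1-n < N_2$ simultaneously, which is impossible as soon as $k \geq N_1 + N_2$. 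Finally, when $\dim W < \infty$ the linearly compact topology on $W$ is discrete, so the condition $a_n \to 0$ forces $a_n=0$ for all but finitely many $n$, whence $W[[x,x^{-1}]]^{conv} = W((x))$.

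The main obstacle is the convergence of $c_k$, which requires translating the continuity of multiplication in $W$ into a two-sided decay estimate that simultaneously controls the positive and negative tails of the index $n$; once this is in place, the remaining claims reduce to either formal verifications or elementary combinatorics.
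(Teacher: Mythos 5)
Your proof is correct and follows essentially the same route as the paper's: both rest on the key absorption property of open subspaces (the paper gets it from a basis of open two-sided ideals \`a la Dieudonn\'e, you get the equivalent one-sided statements $\Omega_1\cdot W\subseteq\Omega$, $W\cdot\Omega_2\subseteq\Omega$, which in the linearly compact setting follow by dualizing the comultiplication rather than from bare joint continuity at zero), then use completeness for convergence of each $c_k$ and closedness of open subspaces for the decay $c_k\to 0$. The only cosmetic difference is that the paper splits $c_k$ into two one-sided tails and checks each is Cauchy, while you isolate a single finite middle range; the remaining claims about $W((x))$ and the finite-dimensional case are handled identically.
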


\begin{proof}
Let $a(x)=\sum_{n\in \Z}a_n x^{-n-1}$ and $b(x)=\sum_{n \in \Z}b_n x^{-n-1}$ be elements of $W[[x,x^{-1}]]^{conv}$. The product induced from $W$ is
$c(z)=\sum_{n\in \Z} c_n x^{-n-1}$, where $c_n=\sum_{m\in \Z} a_{m} \cdot b_{n-m-1}$, and we only need to verify that it is well-defined. That is, $c_n$ must be a well-defined element of $W$ for any $n \in \Z$ and also $c_n \rightarrow 0$ for $n \rightarrow \infty$.

Fixing $n \in \Z$, we have that $c_n$ exists if and only if both $\sum_{m=0}^{\infty}a_{m} \cdot b_{n-m-1}$ and $\sum_{m=0}^{\infty}a_{-m-1} \cdot b_{n+m}$ converge in $W$, which in turns happens if and only if both of their sequences of partial sums are Cauchy sequences. We will only prove that $\{\sum_{m=0}^{M}a_{m} \cdot b_{n-m-1}\}_{M \geq 0}$ is a Cauchy sequence, the other one being analogous. Let us choose a basis of open neighborhoods $\{U_k\}_{k \geq 0}$ of $W$ where each $U_k$ is also a two-sided ideal (for example, extending the proof of \cite[Ch.~1,~Prop.~1]{Dieudonne} to the non-commutative case). The convergence of $a(x)$ and $b(x)$ implies that for any $k \geq 0$ there exist $M_1,M_2 \geq 0$ such that $a_m \in U_k$ for $m \geq M_1$ and $b_m \in U_k$ for $m \geq M_2$. Taking $M=\text{max}\{M_1,M_2\}$, we have that $\sum_{m=M'}^{M''}a_{m} \cdot b_{n-m-1} \in U_k \cdot W \subseteq U_k$ for any $M''\geq M' \geq M$. Thus it is a Cauchy sequence, which proves that each $c_n$ is well defined.

Moreover, if $N=M_1+M_2$ and $n > N$, then for all $m \geq 0$ we have that either $m \geq M_1$ or $n-m-1 \geq M_2$, so $\sum_{m=0}^{\infty}a_{m} \cdot b_{n-m-1} \in U_k$. Performing similar computations with $\sum_{m=0}^{\infty}a_{-m-1} \cdot b_{n+m}$ allows us to conclude that $c_n \rightarrow 0$ for $n \rightarrow \infty$. Finally, $W((x))$ is clearly a subalgebra since the coefficients of the singular part of its elements become null from a certain point onwards, and if $\text{dim }W < \infty$ these are the only sequences that converge to $0$ because $W$ is discrete in this case.
\end{proof}

We may now state and prove the desired result.

\begin{proposition}
Let $(V,\Ydual(x),c)$ be a topological vertex coalgebra that also has a linearly compact algebra structure $(V,\cdot,\eta)$. Then $V$ is a vertex bicoalgebra if and only if the maps $$\Ydual(x): V \rightarrow (V \hotimes V) [[x,x^{-1}]]^{conv} \quad \text{and} \quad c: V \rightarrow \C$$are algebra homomorphisms.
\end{proposition}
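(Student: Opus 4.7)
The plan is to prove both directions simultaneously by unraveling the definitions on each side and observing that they produce the same pair of compatibility equations. Formally this is the monoidal analogue of the classical fact that a bialgebra can be defined either as an algebra in the category of coalgebras or as a coalgebra in the category of algebras; the only subtlety lies in the topological and formal-power-series bookkeeping that the lemma above was designed to handle.

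First I would handle the unit. By the definition of TVC homomorphism, $\eta:\C\to V$ is a TVC map precisely when $c\circ\eta=c_\C=Id_\C$ and $\Ydual_V(x)\circ\eta=(\eta\hotimes\eta)\circ\Ydual_\C(x)$. Using the trivial TVC structure on $\C$ (namely $\Ydual_\C(x)(1)=1\hotimes 1$), these two conditions read exactly as $c(1_V)=1$ and $\Ydual(x)(1_V)=1_V\hotimes 1_V$; and these say precisely that $c$ and $\Ydual(x)$ preserve the unit of $V$ and of $(V\hotimes V)[[x,x^{-1}]]^{conv}$, respectively, the latter unit being $1_V\hotimes 1_V$ by the preceding lemma.

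Next I would handle the multiplication. Using the formulas in Proposition~\ref{TVC_monoidal}, the counit compatibility $c_V\circ\,\cdot\,=c_{V\hotimes V}$ expands to $c(v_1\cdot v_2)=c(v_1)\,c(v_2)$, which is exactly multiplicativity of $c$. For the remaining equation $\Ydual_V(x)\circ\,\cdot\,=(\cdot\hotimes\cdot)\circ\Ydual_{V\hotimes V}(x)$, I would substitute the explicit formula
\[ \Ydual_{V\hotimes V}(x)(v_1\hotimes v_2) = (Id\hotimes\tau\hotimes Id)\bigl(\Ydual_V(x)v_1\hotimes \Ydual_V(x)v_2\bigr). \]
Post-composition with $\cdot\hotimes\cdot$ then pairs the first with the third tensor factor and the second with the fourth, so the right-hand side becomes precisely the product of $\Ydual_V(x)v_1$ and $\Ydual_V(x)v_2$ as elements of $(V\hotimes V)[[x,x^{-1}]]^{conv}$, where the product of coefficients is performed in the linearly compact algebra $V\hotimes V$ via $(a\hotimes b)(c\hotimes d)=ac\hotimes bd$. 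Hence the condition is equivalent to $\Ydual(x)(v_1\cdot v_2)=\Ydual(x)v_1\cdot\Ydual(x)v_2$, i.e., multiplicativity of $\Ydual(x)$.

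Combining the two steps yields the equivalence. Continuity is automatic on both sides: $\Ydual(x)$ and $c$ are continuous by the TVC axioms while $\cdot$ and $\eta$ are continuous by the linearly compact algebra axioms, so no additional topological verifications are needed. The main obstacle I anticipate is purely notational: carefully tracking the flip $\tau$ and justifying that the identification obtained at the level of the series $\Ydual_V(x)v_1\hotimes\Ydual_V(x)v_2$ really coincides, coefficient-wise and convergently, with the product built in the lemma on $(V\hotimes V)[[x,x^{-1}]]^{conv}$. Once that identification is made explicit the rest is a direct translation of definitions.
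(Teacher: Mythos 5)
Your proposal is correct and follows essentially the same route as the paper: both arguments unravel the four conditions on each side (unit and counit compatibilities plus the two multiplicativity conditions), match them up pairwise, and reduce the only nontrivial point to the identification of $(\cdot\hotimes\cdot)\circ(Id\hotimes\tau\hotimes Id)\circ(\Ydual(x)v_1\hotimes\Ydual(x)v_2)$ with the product of the two series in $(V\hotimes V)[[x,x^{-1}]]^{conv}$, which the paper carries out coefficient-wise via $\sum_{m}\Delta_m(v_1)\cdot\Delta_{n-m-1}(v_2)$. No substantive difference in approach.
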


\begin{proof}
For $V$ to be a vertex bicoalgebra, the following statements must hold true

\begin{enumerate}
\item $\Ydual_{V}(x)(v \cdot w) = (\cdot \hotimes \cdot) \Ydual_{V \hotimes V}(x)(v \hotimes w)$ for all $v,w \in V$.
\item $c_{V}(v \cdot w) = c_{V \hotimes V}(v \hotimes w)$ for all $v,w \in V$.
\item $\Ydual_{V}(x) \eta= (\eta \hotimes \eta) \Ydual_{\C}(x)$.
\item $c_{V} \eta = c_{\C}$.
\end{enumerate}

On the other hand, the maps $\Ydual(x)$ and $c$ are algebra homomorphisms if and only if we have

\begin{enumerate}[label=(\arabic*')]
\item $\Ydual(x)(v \cdot w) = \Ydual(x)(v) \cdot \Ydual(x)(w)$ for all $v,w \in V$.
\item $\Ydual(x)(1)= 1 \hotimes 1$.
\item $c(v \cdot w) = c(v)\,c(w)$ for all $v,w \in V$.
\item $c(1) = 1$.
\end{enumerate}

It is clear that (2) and (3') are equivalent, as well as (3) with (2') and (4) with (4'). Thus we only need to establish the equivalence (1) $\Longleftrightarrow$ (1'). But this holds because for any $v,w \in V$, we have
\begin{align*}
(\cdot \hotimes \cdot) \Ydual_{V \hotimes V}(x)(v \hotimes w) &= (\cdot \hotimes \cdot) (Id \hotimes \tau \hotimes Id)\, (\Ydual(x)(v) \hotimes \Ydual(x)(w))\\
&=\cdot_{V \hotimes V}\, (\Ydual(x)(v) \hotimes \Ydual(x)(w))\\
&=\cdot_{V \hotimes V}\,\sum_{n \in \Z} \sum_{m \in \Z}\Delta_{m}(v) \hotimes \Delta_{n-m-1}(w)\,x^{-n-1}\\
&=\sum_{n \in \Z} \sum_{m \in \Z}\Delta_{m}(v) \cdot\Delta_{n-m-1}(w)\,x^{-n-1}\\
&=\Ydual(x)(v) \cdot \Ydual(x)(w).
\end{align*}
\end{proof}
Let us denote the category of vertex bicoalgebras by $\mathcal{VBC}$, where the morphisms are those morphisms in $\mathcal{TVC}$ that preserve the algebra structure. With these definitions, the Cartier duality for vertex bialgebras may be stated as follows.

\begin{proposition}[Cartier duality]\label{Cartier_vertex_bialgebras}
If $U$ is a vertex bialgebra, then its dual space $V=U^*$ is a vertex bicoalgebra. Conversely, if $V$ is a vertex bicoalgebra, then its continuous dual space $U=V'$ is a vertex bialgebra.

Moreover, these correspondences define an antiequivalence of categories between $\mathcal{VBA}$ and $\mathcal{VBC}$.
\end{proposition}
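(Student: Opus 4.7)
The plan is to derive this proposition as a formal consequence of Theorem \ref{first_vertex_Cartier} by observing that the dualization functor $*$ is not merely an antiequivalence between $\mathcal{VA}$ and $\mathcal{TVC}$ but a \emph{monoidal} antiequivalence. Granted this upgrade, a vertex bialgebra is by definition a comonoid in the monoidal category $(\mathcal{VA}, \otimes, \C)$ and a vertex bicoalgebra is a monoid in $(\mathcal{TVC}, \hotimes, \C)$; any monoidal antiequivalence sends comonoids to monoids and vice versa (and intertwines the corresponding homomorphisms), so the proposition reduces to verifying the monoidal property of $*$.

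For this first step, I would combine the natural isomorphism $(U_1 \otimes U_2)^* \simeq U_1^* \hotimes U_2^*$ in $\mathcal{LCV}ec$ with Proposition \ref{TVC_monoidal}, which shows that when $U_1$ and $U_2$ are vertex algebras this isomorphism intertwines the induced topological vertex coalgebra structures on both sides. Together with the identification $\C^* \simeq \C$ in $\mathcal{TVC}$, this exhibits $*$ as a strong monoidal antiequivalence.

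The second step then transports the structure. Given a vertex bialgebra $U$, its coproduct $\Delta: U \to U \otimes U$ and counit $\epsilon: U \to \C$ are morphisms in $\mathcal{VA}$, so by Theorem \ref{first_vertex_Cartier} their duals $\Delta^*: U^* \hotimes U^* \to U^*$ and $\epsilon^*: \C \to U^*$ are morphisms in $\mathcal{TVC}$. Associativity and unitality of $(\Delta^*, \epsilon^*)$ follow from dualizing coassociativity and counitality of $(\Delta, \epsilon)$; this is precisely the classical Cartier duality recalled at the beginning of the section. Thus $V = U^*$ is a monoid in $\mathcal{TVC}$, i.e., a vertex bicoalgebra. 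The converse direction, starting from a vertex bicoalgebra $V$ and applying the continuous dual to its multiplication and unit, is entirely analogous and yields a vertex bialgebra structure on $U = V'$.

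The antiequivalence of categories between $\mathcal{VBA}$ and $\mathcal{VBC}$ follows from functoriality of both $*$ and the continuous dual applied to the bicoalgebra data, combined with the antiequivalence in Theorem \ref{first_vertex_Cartier}: a morphism $f: U_1 \to U_2$ of vertex bialgebras dualizes to a morphism $f^*: U_2^* \to U_1^*$ in $\mathcal{TVC}$ which automatically respects the dualized algebra structures. The main potential obstacle is purely bookkeeping, namely confirming that the equivalent characterization given in the immediately preceding proposition (requiring $\Ydual(x)$ and $c$ to be algebra homomorphisms with $\Ydual(x)$ landing in $(V \hotimes V)[[x,x^{-1}]]^{conv}$) indeed coincides with the condition that $\Delta^*$ and $\epsilon^*$ lie in $\mathcal{TVC}$; but this is exactly what the monoidal structure of $*$ encodes, so no additional computation beyond the identifications above is required.
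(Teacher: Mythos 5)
Your proposal is correct and takes essentially the same route as the paper: the paper's proof likewise combines Theorem \ref{first_vertex_Cartier}, classical Cartier duality, and the fact (Proposition \ref{TVC_monoidal}) that the topological vertex coalgebra structure on $V \hotimes V$ is precisely the one making $V \hotimes V \simeq (U \otimes U)^*$ an isomorphism in $\mathcal{TVC}$, which is exactly your monoidality of the dualization functor. You have merely made the transport-of-structure step more explicit than the paper's one-sentence argument.
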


\begin{proof}
It follows immediately from putting together Theorem \ref{first_vertex_Cartier}, the first version of the classical Cartier duality and the fact that the structure of topological vertex coalgebra defined in $V \hotimes V$ is exactly the one that turns the isomorphism $V \hotimes V \simeq (U \otimes U)^*$ of linearly compact spaces into an isomorphism of topological vertex coalgebras.
\end{proof}

\section{Formal vertex laws}\label{section_formal_vertex_laws}

We will now describe in detail the construction of the formal group law associated to an infinite-dimensional Lie algebra $\mathfrak{g}$, in order to understand how we may apply that construction to the conformal case.

Let $\mathfrak{g}$ be an infinite-dimensional Lie algebra with an ordered basis\linebreak $\{e_i : i \in \mathcal{I}\}$, where $\mathcal{I}$ is some totally ordered infinite set. Let $|\mathcal{I}|$ be its cardinal, and let $U=\U(\mathfrak{g})$. One way of defining the PBW basis for $U$ is as follows: for any $\textbf{k}: \mathcal{I} \rightarrow \Z_{+}$, $i \mapsto k_i$ with finite support, we denote
\begin{equation}\label{elemento_base_U}
e_{\textbf{k}} \, = \, e_{i_1}^{k_1} \cdots e_{i_n}^{k_n},
\end{equation}
where $\{i_1,...,i_n\}$ is the support of $\textbf{k}$, labelled in such a way as to form an increasing sequence in $\mathcal{I}$. The set of all such maps $\textbf{k}$ forms a monoid, which will be denoted as $\textbf{K} = \Z_{+}^{(\mathcal{I})}$. The \textit{norm} of $\textbf{k} \in \textbf{K}$ is $|\textbf{k}|=\sum_{i \in \mathcal{I}}k_{i}$. Now the PBW basis for $U$ is the set $\{e_{\textbf{k}} : \textbf{k} \in \textbf{K} \}$.

We shall denote its \textit{dual pseudobasis} by $\{ e^{\textbf{k}} : \textbf{k} \in \textbf{K} \}$, where each $e^{\textbf{k}} \in U^*$ is determined by $e^{\textbf{k}}(e_{\textbf{k}'})=\delta_{\textbf{k}\textbf{k}'}$ for any $\textbf{k}'\in\textbf{K}$ and $\delta$ denotes the Kronecker delta. It is not a basis in the classical sense because a generic element $f\in U^*$ can be written as $$\dst f = \sum_{\textbf{k} \in \textbf{K}}f(e_{\textbf{k}})e^{\textbf{k}}$$ but this sum is not necessarily finite, so we have $U^*=\prod_{\textbf{k} \in \textbf{K}} \C e^{\textbf{k}}$.

Let $A$ be the algebra $\C
[[X_{i} : i\in \mathcal{I}]]$. We shall denote monomials in $A$ as $X^{\textbf{k}}=\prod_{i \in \mathcal{I}} X_{i}^{k_{i}},$ where $\textbf{k} \in \textbf{K}$, so that $X^{\textbf{k}+\textbf{k}'}=X^{\textbf{k}}X^{\textbf{k}'}$ for all $\textbf{k}, \textbf{k}' \in \textbf{K}$. Setting $\textbf{k}!=\prod_{i \in \mathcal{I}}k_{i}! $ and $X^{(\textbf{k})}=\frac{1}{\textbf{k}!}X^{\textbf{k}}$, we may state the following classical result, whose proof may be found in many textbooks such as \cite{Serre, Hazewinkel_formalgroups}.

\begin{proposition}\label{iso_U_dual}
The map $\psi: A \rightarrow U^*$ given by $X^{(\textbf{k})} \mapsto e^{\textbf{k}}$ is an algebra isomorphism.
\end{proposition}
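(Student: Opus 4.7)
The plan is to first observe that $\psi$ is tautologically a continuous linear isomorphism: both $A$ and $U^*$ are canonically identified with $\prod_{\textbf{k} \in \textbf{K}} \C$ via the pseudobases $\{X^{(\textbf{k})}\}$ and $\{e^{\textbf{k}}\}$, and $\psi$ sends one decomposition to the other. It therefore remains only to verify that $\psi$ respects the algebra structures. Since multiplication in $U^*$ is the dual of the coproduct of $U$ (an algebra map in $\mathcal{LCV}ec$) and is therefore continuous, while the multiplication in $A$ is also continuous with respect to the $X$-adic product topology, it suffices to check multiplicativity and the preservation of the unit on the pseudobasis elements.

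The key calculation is the coproduct of a PBW monomial. Each $e_i$ is primitive in $\U(\mathfrak{g})$, so $e_i \otimes 1$ and $1 \otimes e_i$ commute in $U \otimes U$ and hence
$$\Delta(e_i)^{k} \,=\, (e_i \otimes 1 + 1 \otimes e_i)^k \,=\, \sum_{j=0}^{k} \binom{k}{j}\, e_i^j \otimes e_i^{k-j}.$$
Taking the product of these expressions for $i$ in the support of $\textbf{k}$ in the prescribed increasing order, the tensor factors in each summand reassemble into PBW monomials, yielding
$$\Delta(e_{\textbf{k}}) \,=\, \sum_{\textbf{j} \leq \textbf{k}} \binom{\textbf{k}}{\textbf{j}}\, e_{\textbf{j}} \otimes e_{\textbf{k}-\textbf{j}},$$
where $\binom{\textbf{k}}{\textbf{j}} = \prod_{i \in \mathcal{I}} \binom{k_i}{j_i}$ and $\textbf{j} \leq \textbf{k}$ means $j_i \leq k_i$ for all $i$ (this is a finite sum, since $\textbf{k}$ has finite support).

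Dualizing gives $(e^{\textbf{k}'} \cdot e^{\textbf{k}''})(e_{\textbf{k}}) = (e^{\textbf{k}'} \otimes e^{\textbf{k}''})(\Delta e_{\textbf{k}}) = \binom{\textbf{k}}{\textbf{k}'}\,\delta_{\textbf{k}' + \textbf{k}'',\, \textbf{k}}$, so
$$e^{\textbf{k}'} \cdot e^{\textbf{k}''} \,=\, \binom{\textbf{k}' + \textbf{k}''}{\textbf{k}'}\, e^{\textbf{k}' + \textbf{k}''}.$$
On the other hand, directly in $A$,
$$X^{(\textbf{k}')} \cdot X^{(\textbf{k}'')} \,=\, \frac{1}{\textbf{k}'!\,\textbf{k}''!}\, X^{\textbf{k}' + \textbf{k}''} \,=\, \binom{\textbf{k}' + \textbf{k}''}{\textbf{k}'}\, X^{(\textbf{k}' + \textbf{k}'')},$$
so the multiplication rules match under $\psi$. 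Finally, the unit of $A$ is $1 = X^{(\textbf{0})}$, while the unit of $U^*$ is the counit $\epsilon$ of $U$, which takes value $1$ on $e_{\textbf{0}} = 1_{U}$ and $0$ on all other PBW monomials; thus $\epsilon = e^{\textbf{0}} = \psi(X^{(\textbf{0})})$, and the verification is complete.

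The only real obstacle is notational: one must keep careful track of the multi-index arithmetic and ensure that, when multiplying out $\Delta(e_{i_1})^{k_1} \cdots \Delta(e_{i_n})^{k_n}$ inside $U \otimes U$, the left and right tensor factors in each summand are themselves PBW monomials indexed in the same increasing order used to define $e_{\textbf{j}}$. This is automatic once the basis $\{e_i\}$ is chosen with a fixed total order, but it is worth emphasising since it is the only place where the non-commutativity of $U$ could cause subtleties.
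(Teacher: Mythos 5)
Your proof is correct and is the standard argument found in the references (Serre, Hazewinkel) that the paper cites in place of giving its own proof: compute $\Delta(e_{\textbf{k}})$ from the primitivity of the $e_i$, dualize to get the divided-power multiplication rule $e^{\textbf{k}'}\cdot e^{\textbf{k}''}=\binom{\textbf{k}'+\textbf{k}''}{\textbf{k}'}e^{\textbf{k}'+\textbf{k}''}$, and match it with the product of the $X^{(\textbf{k})}$. Your closing remark about the tensor factors reassembling into correctly ordered PBW monomials, and the reduction to the pseudobasis via continuity and density, are exactly the points that need care, and you handle both.
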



Now this isomorphism allows us to translate the coalgebra structure from $U^*$ to $A$. Let us define for each $l \in \mathcal{I}$ the formal power series $$F_l(X,Y) = \Delta (X_l) \in A \hotimes A = \C[[ X_i, Y_i : i \in \mathcal{I}]],$$where $X_{i}$ and $Y_{i}$ stand for $X_{i} \hotimes 1$ and $1 \hotimes X_{i}$ in $A \hotimes A$, $X=\{X_i\}_{i \in \mathcal{I}}$ and $Y=\{Y_i\}_{i \in \mathcal{I}}$. Let $F(X,Y)=\{F_l(X,Y)\}_{l \in \mathcal{I}} \in \C[[X_{i},Y_{i}: i \in \mathcal{I}]]^{\mathcal{I}}$. One can easily show that the counit and coassociativity axioms on $V$ are equivalent to the following properties for $F(X,Y)$:
\begin{align}
&F(X,0)=X=F(0,X)\label{unit_formal_group_law}\\
&F(X,F(Y,Z))=F(F(X,Y),Z)\label{associativity_formal_group_law}
\end{align}

On the other hand, if we write $$F_{l}(X,Y)=\sum_{\textbf{k},\textbf{k}' \in \textbf{K}} c_{\textbf{k},\textbf{k}'}^{l} X^{\textbf{k}} Y^{\textbf{k}'}$$with $c_{\textbf{k},\textbf{k}'}^{l} \in \C$ for all $l \in \mathcal{I}$ and $\textbf{k},\textbf{k}' \in \textbf{K}$, we have that these coefficients satisfy the following finiteness condition: for any fixed pair $\textbf{k},\textbf{k}' \in \textbf{K}$, there exist finitely many $l \in \mathcal{I}$ such that $c_{\textbf{k},\textbf{k}'}^{l} \neq 0$.

A set of formal power series $F(X,Y)$ satisfying the equations above and the finiteness condition is called an \textit{infinite-dimensional formal group law}. That condition allows us to write the associativity property, since we can use it to show that the coefficients in (\ref{associativity_formal_group_law}) are well-defined \cite{Hazewinkel_formalgroups}.

From now on, we begin our construction of the vertex analogue of the concept of formal group law. First we notice that if $R$ is a Lie conformal algebra, we can take $\mathfrak{g}=R_{Lie}$ in the above construction, and some of the results still hold. For example, since $U(R)$ and $\U(R_{Lie})$ have the same coalgebra structure, their duals are isomorphic as algebras, and thus Proposition \ref{iso_U_dual} is still valid for $U=\U(R)$. This means that using the same argument as above, we can translate the topological vertex coalgebra structure of $\U(R)^*$ onto $A$, and encode this extra structure through certain formal power series that shall play the role of formal group laws in this setting. While viewing the algebra $A$ as a vertex bicoalgebra, we shall denote it as $\textbf{V}$.

Let us define for each $n \in \Z$ and for any $l \in \mathcal{I}$ the formal power series 
\begin{equation}\label{def_Fijm}
F_{l}^n(X,Y) = \Delta_n (X_{l}) \in \textbf{V} \hat{\otimes} \textbf{V} = \C[[X_{i},Y_{i} : i \in \mathcal{I}]],
\end{equation} 
and let us also set the following notation:
\begin{align}
&F_{l}(x)(X,Y)=\sum_{n \in \Z} F_{l}^n(X,Y) x^{-n-1} \in (\textbf{V} \hotimes \textbf{V})[[x,x^{-1}]],\label{def_Fijx}\\
&F(x)(X,Y)=\{F_{l}(x)(X,Y) \}_{l \in \mathcal{I}} \in (\textbf{V} \hotimes \textbf{V})[[x,x^{-1}]]^{\mathcal{I}}.\label{def_Fx}
\end{align}

We will now proceed to rewrite the axioms of a topological vertex coalgebra in $\textbf{V}$ in terms of properties of $F(x)(X,Y)$.

\begin{proposition}\label{def_formal_vertex_law}
The set of formal power series $F(x)(X,Y)$ defined by (\ref{def_Fx}) satisfies
\begin{enumerate}
\item Convergence: For any $r \in \Z_+$ and any finite $I \subseteq \mathcal{I}$, let $\mathcal{M}_{r,I}$ be the closure of the ideal $$ \left\langle \left\{ X^{\textbf{k}}Y^{\textbf{k}'} \; \Big| \; r < |\textbf{k}|+|\textbf{k}'| \wedge \exists j \notin I : k_{j} \neq 0 \vee k_{j}' \neq 0 \right\} \right\rangle.$$ 
Then there exists $N_{r,I} \in \Z_+$ such that for all $l \in \mathcal{I}$ and $n \geq N_{r,I}$,
\begin{equation}\label{convergence_vertex_formal_law}
F_{l}^n(X,Y) \equiv 0 \quad \textup{mod $\mathcal{M}_{r,I}$}
\end{equation}
\item Left identity:
\begin{equation}\label{left_identity_CFGL}
F(x)(0,X)=X
\end{equation}
\item Right identity: Each formal power series $F_{i}(x)(X,0)$ is holomorphic in $x$ and
\begin{equation}
F(x)(X,0)\big|_{x=0}=X.\label{right_identity_CFGL}
\end{equation}
\item Jacobi identity:
\begin{align}
x_0^{-1}\delta&\left(\frac{x_1-x_2}{x_0}\right)F(x_1)(X,F(x_2)(Y,Z)) \nonumber\\
-&x_0^{-1}\delta\left(\frac{x_2-x_1}{-x_0}\right)F(x_2)(Y,F(x_1)(X,Z)) \label{Jacobi_CFGL}\\
&=x_2^{-1}\delta\left(\frac{x_1-x_0}{x_2}\right)F(x_2)(F(x_0)(X,Y),Z).\nonumber
\end{align}
\end{enumerate}
\end{proposition}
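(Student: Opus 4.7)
My plan is to prove each item by translating the corresponding axiom of the topological vertex coalgebra on $\textbf{V}$---enriched with its vertex bicoalgebra structure---into a property of $F(x)(X,Y)$, using the algebra isomorphism $\psi: A \to \U(R)^*$ of Proposition \ref{iso_U_dual}. Under $\psi$, the covacuum $c:\textbf{V}\to\C$ corresponds to evaluation at $X=0$, since it picks out the constant coefficient; $\Ydual(x)$ applied to a generator $X_l$ equals $F_l(x)(X,Y)$ by definition (\ref{def_Fijx}); and because $\textbf{V}$ is a vertex bicoalgebra, $\Ydual(x)$ is an algebra homomorphism. Consequently $\Ydual(x)(X^{\textbf{k}}) = \prod_i F_i(x)(X,Y)^{k_i}$, and the action of $\Ydual(x)$ on any formal power series is determined by continuous extension.

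With this dictionary, items (2), (3), and (4) amount to unwindings. For (2), the left counit axiom $(c \hotimes Id)\Ydual(x)X_l = X_l$ translates, under $\psi$, to substituting $X_i=0$ in the first tensor slot of $F_l(x)(X,Y)$, yielding $F_l(x)(0,Y) = X_l$, which is (\ref{left_identity_CFGL}). For (3), the cocreation axiom gives analogously that $F_l(x)(X,0)$ is holomorphic in $x$ and equals $X_l$ at $x=0$. For (4), I would apply the co-Jacobi identity (\ref{CoJacobi}) at $v = X_l$ and read each composition as a formal substitution. The nested composition $(Id \hotimes \Ydual(x_2))\Ydual(x_1)X_l$ applies $\Ydual(x_2)$ to the $Y$-slot of $F_l(x_1)(X,Y)$; the algebra-homomorphism property yields $\Ydual(x_2)(Y^{\textbf{k}'}) = \prod_i F_i(x_2)(Y,Z)^{k_i'}$, which is exactly the formal substitution $F_l(x_1)(X, F(x_2)(Y,Z))$. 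The other two compositions are treated analogously, producing the three terms of (\ref{Jacobi_CFGL}). These substitutions are meaningful thanks to the convergence condition (1).

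The main obstacle is thus (1). Unwinding, $[X^{\textbf{k}}Y^{\textbf{k}'}]F_l^n$ equals (up to factorials) $\Delta_n(X_l)(e_{\textbf{k}} \otimes e_{\textbf{k}'}) = e^{\delta_l}(e_{\textbf{k}(n)}e_{\textbf{k}'})$, the coefficient of $e_l$ in the PBW expansion of $e_{\textbf{k}(n)}e_{\textbf{k}'}$; the condition $F_l^n \in \mathcal{M}_{r,I}$ is thus the vanishing of this coefficient for every ``good'' pair $(\textbf{k},\textbf{k}')$ (meaning $|\textbf{k}|+|\textbf{k}'|\leq r$ or the support is contained in $I$), uniformly in $l \in \mathcal{I}$, for $n\geq N_{r,I}$. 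The uniform-in-$v$ convergence axiom of the TVC $\textbf{V}$---which, via Theorem \ref{first_vertex_Cartier} and Hubbard's filtration argument, encodes the truncation property in $\U(R)$---guarantees, for any finite-dimensional $T \subseteq \U(R) \otimes \U(R)$, an integer $N_T$ such that $e^{\delta_l}(e_{\textbf{k}(n)}e_{\textbf{k}'})=0$ for $e_{\textbf{k}} \otimes e_{\textbf{k}'}\in T$, all $l$, and $n\geq N_T$. The heart of the proof is to reduce the infinitely many good monomials to a single finite-dimensional $T_{r,I}$ of this form. I would combine the PBW-degree-drop property of the $n$-th products for $n \geq 0$ (which places $e_{\textbf{k}(n)}e_{\textbf{k}'}$ in $\U(R)^{\leq |\textbf{k}|+|\textbf{k}'|-1}$), the finiteness of $I$, and the algebra-homomorphism property of $\Ydual$ to propagate truncation from the generators $X_l$ to all polynomial expressions in them, thereby compressing the infinite family of vanishing conditions into uniform convergence on a single finite-dimensional $T_{r,I}$.
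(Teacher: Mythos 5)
Your handling of items (2)--(4) follows the same route as the paper's proof: under the isomorphism of Proposition \ref{iso_U_dual} the covacuum becomes evaluation of the constant term, $\Ydual(x)(X_l)=F_l(x)(X,Y)$ holds by definition, and the algebra-homomorphism property of $\Ydual(x)$ turns the compositions appearing in the left counit, cocreation and co-Jacobi axioms into the formal substitutions $F(x)(0,X)$, $F(x)(X,0)$ and $F(x_1)(X,F(x_2)(Y,Z))$, etc. This is exactly the paper's computation and is correct.

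The problem is item (1), precisely the step you call the heart of the proof. The paper's argument is that the $\mathcal{M}_{r,I}$ form a basis of open neighbourhoods of zero in $\textbf{V}\hotimes\textbf{V}=(\U(R)\otimes\U(R))^{*}$, so that (\ref{convergence_vertex_formal_law}) is literally the uniform convergence axiom (\ref{punctual_convergence}) evaluated at $v=X_l$. For this to work $\mathcal{M}_{r,I}$ must be open, hence of finite codimension: the monomials \emph{not} lying in $\mathcal{M}_{r,I}$ have to be the finitely many $X^{\textbf{k}}Y^{\textbf{k}'}$ with $|\textbf{k}|+|\textbf{k}'|\le r$ \emph{and} $\textup{supp}\,\textbf{k}\cup\textup{supp}\,\textbf{k}'\subseteq I$, i.e.\ $\mathcal{M}_{r,I}=T_{r,I}^{\perp}$ for the finite-dimensional space $T_{r,I}$ spanned by the $e_{\textbf{k}}\otimes e_{\textbf{k}'}$ over that finite index set, and a single application of (\ref{punctual_convergence}) to $T_{r,I}$ finishes the proof. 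You instead take the constrained monomials to be those of degree $\le r$ \emph{or} with support in $I$ --- an infinite family --- and then propose to compress the resulting infinitely many vanishing conditions into one finite-dimensional $T_{r,I}$. No such compression exists: already the degree-two part of your family demands a single $N$ with $e_{i\,(n)}e_j=0$ in $R$ for \emph{all} $i,j\in\mathcal{I}$ and all $n\ge N$, that is, a uniform bound on the orders of the $\lambda$-brackets of all pairs of basis elements; this fails, for instance, for the Virasoro conformal algebra, where $(\Del^{k}L)_{(k+3)}L$ is a nonzero multiple of the central element for every $k\ge 0$. The filtration-degree drop of the non-negative products that you invoke controls where $e_{\textbf{k}\,(n)}e_{\textbf{k}'}$ lands, not the range of $n$ for which it is nonzero, so it cannot supply the missing uniformity. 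Once the constrained set is read as the finite one, item (1) is immediate from the convergence axiom and none of this extra machinery is needed.
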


\begin{proof}
Since $\{X^{\textbf{k}} \hotimes X^{\textbf{k}'} : \textbf{k}, \textbf{k}' \in \textbf{K} \}$ is a pseudobasis for $\textbf{V} \hotimes \textbf{V}$, for any given $l \in \mathcal{I}$ and $n \in \Z$ we can write 
\[
F_{l}^n(X,Y) = \sum_{\textbf{k},\textbf{k}' \in \textbf{K}} c_{\textbf{k}\textbf{k}'}^{ln}\, X^{\textbf{k}} \hotimes X^{\textbf{k}'} = \sum_{\textbf{k},\textbf{k}' \in \textbf{K}} c_{\textbf{k}\textbf{k}'}^{ln}\, X^{\textbf{k}} Y^{\textbf{k}'}
\]
where $c_{\textbf{k}\textbf{k}'}^{ln} \in \C$ for all $\textbf{k},\textbf{k}' \in \textbf{K}$ and $n\in \Z$. If we also define $c_{\textbf{k}\textbf{k}'}^{l}(x)=\sum_{n \in \Z} c_{\textbf{k}\textbf{k}'}^{ln} x^{-n-1}$, we have that $$F_{l}(x)(X,Y) = \sum_{\textbf{k},\textbf{k}' \in \textbf{K}} c_{\textbf{k}\textbf{k}'}^{l}(x)\, X^{\textbf{k}} Y^{\textbf{k}'}.$$

The first property to be proved is simply the result of writing the convergence axiom of $\Ydual(x)$ in $\textbf{V}$, using the fact that the ideals $\mathcal{M}_{r,I}$ form a basis of open neighborhoods of zero in $\C[[X_i,Y_i : i \in \mathcal{I}]]$.

On the other hand, the covacuum map $c: \textbf{V} \rightarrow \C$ may be computed as $c(X^{\textbf{k}})=\delta_{\textbf{k},0}$ for all $\textbf{k} \in \textbf{K}$. This allows us to calculate
\begin{align*}
(c \hotimes Id) \Ydual(x)(X_{l}) &= (c \hotimes Id) \left( \sum_{\textbf{k},\textbf{k}' \in \textbf{K}} c_{\textbf{k}\textbf{k}'}^{l}(x) X^{\textbf{k}} \hotimes X^{\textbf{k}'} \right)\\
&= \sum_{\textbf{k},\textbf{k}' \in \textbf{K}} c_{\textbf{k}\textbf{k}'}^{l}(x)\, c(X^{\textbf{k}}) \hotimes X^{\textbf{k}'}\\
&=\sum_{\textbf{k}' \in \textbf{K}}c_{0\textbf{k}'}^{l}(x)\,X^{\textbf{k}'}\\
&=\sum_{\textbf{k},\textbf{k}' \in \textbf{K}} c_{\textbf{k}\textbf{k}'}^{l}(x)\, Y^{\textbf{k}} Z^{\textbf{k}'} \Big\rvert_{\substack{Y=\,0\\Z=X}}.\\
&=F_{l}(x)(0,X).
\end{align*}

Therefore, the covacuum is a left counit in $\textbf{V}$ if and only if $F_{l}(x)(0,X)=X_{l}$ for all $l \in \mathcal{I}$, which is condition (\ref{left_identity_CFGL}). The cocreation axiom of the topological vertex coalgebra $\textbf{V}$ may be transformed into condition (\ref{right_identity_CFGL}) through a similar procedure. 

Lastly, we have

\begin{align*}
x&_0^{-1}\delta\left(\frac{x_1-x_2}{x_0}\right)(Id \hotimes \Ydual(x_2))\Ydual(x_1)(X_{l})\\
&=x_0^{-1}\delta\left(\frac{x_1-x_2}{x_0}\right)(Id \hotimes \Ydual(x_2))\sum_{\textbf{k},\textbf{k}' \in \textbf{K}} c_{\textbf{k}\textbf{k}'}^{l}(x_1)\, X^{\textbf{k}} \hotimes X^{\textbf{k}'}\\
&=x_0^{-1}\delta\left(\frac{x_1-x_2}{x_0}\right)\sum_{\textbf{k},\textbf{k}' \in \textbf{K}} c_{\textbf{k}\textbf{k}'}^{l}(x_1)\, X^{\textbf{k}} \hotimes  \prod_{i' \in \mathcal{I}}\Ydual(x_2)(X_{i'})^{k_{i'}'}\\
&=x_0^{-1}\delta\left(\frac{x_1-x_2}{x_0}\right)\sum_{\textbf{k},\textbf{k}' \in \textbf{K}} c_{\textbf{k}\textbf{k}'}^{l}(x_1)\, X^{\textbf{k}} \hotimes \prod_{i' \in \mathcal{I}}\left( \sum_{\textbf{t},\textbf{t}' \in \textbf{K}} c_{\textbf{t}\textbf{t}'}^{i'}(x_2)\, X^{\textbf{t}} \hotimes X^{\textbf{t}'} \right)^{k_{i'}'}\\
&=x_0^{-1}\delta\left(\frac{x_1-x_2}{x_0}\right)\sum_{\textbf{k},\textbf{k}' \in \textbf{K}} c_{\textbf{k}\textbf{k}'}^{l}(x_1)\, X^{\textbf{k}} \tilde{Y}^{\textbf{k}'} \Big\rvert_{\tilde{Y}=F(x_2)(Y,Z)}\\
&=x_0^{-1}\delta\left(\frac{x_1-x_2}{x_0}\right)F_{l}(x_1)(X,F(x_2)(Y,Z)),
\end{align*}
with $X=X \hotimes 1 \hotimes 1$, $Y=1 \hotimes X \hotimes 1$ and $Z= 1 \hotimes 1 \hotimes X$ in $\textbf{V} \hotimes \textbf{V} \hotimes \textbf{V}$. Performing similar calculations in the other two terms of the co-Jacobi identity in $\textbf{V}$ proves that both sides in (\ref{Jacobi_CFGL}) are well-defined and coincide. 
\end{proof}

\begin{definition}
A \textit{formal vertex law} (of dimension $|\mathcal{I}|$) is a set of formal power series $$F(x)(X,Y)\in (\C[[x,x^{-1},X_{i},Y_{i}: i \in \mathcal{I} ]])^{\mathcal{I}}$$ such that properties (\ref{convergence_vertex_formal_law}) through (\ref{Jacobi_CFGL}) hold.
\end{definition}

We would like to define the category of formal vertex laws so that the analogy with the classical equivalence between the categories of formal group laws and Lie algebras is preserved.

\begin{definition}
Let $F(x)(X,Y)$ and $G(x)(X,Y)$ be two formal vertex laws of dimensions $|\mathcal{I}|$ and $|\mathcal{J}|$. A \textit{homomorphism} between them is a set of formal power series $\alpha(X) \in \C[[X_i : i \in \mathcal{I}]]^{\mathcal{J}}$ whose constant term is zero, such that
\begin{equation}
\alpha(F(x)(X,Y))=G(x)(\alpha(X),\alpha(Y)).
\end{equation}
\end{definition}

With these homomorphisms, formal vertex laws form a category, which we shall denote as $\mathcal{FVL}$.

\begin{proposition}\label{formal_vertex_laws_and_vertex_bicoalgebras}
The category of formal vertex laws is antiequivalent to the category of vertex bicoalgebras whose subjacent linearly compact algebra structure is commutative and local.
\end{proposition}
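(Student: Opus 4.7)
The plan is to build quasi-inverse contravariant functors between $\mathcal{FVL}$ and the indicated full subcategory of $\mathcal{VBC}$, leveraging all the machinery already in place: Cartier duality (Proposition \ref{Cartier_vertex_bialgebras}), Milnor--Moore (Theorem \ref{Milnor_Moore}), the algebra isomorphism $\U(R)^{*} \simeq \C[[X_i : i \in \mathcal{I}]]$ from Proposition \ref{iso_U_dual}, and the axiom-for-axiom correspondence of Proposition \ref{def_formal_vertex_law}.

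\emph{From vertex bicoalgebras to formal vertex laws.} Given such a $V$, I would first pass to $U = V'$, which by Proposition \ref{Cartier_vertex_bialgebras} is a vertex bialgebra; commutativity of $V$ makes $U$ cocommutative, and locality of $V$ makes $U$ connected (under the standard coalgebra/linearly compact algebra dictionary). Theorem \ref{Milnor_Moore} then gives $U \simeq \U(R)$ with $R = \mathcal{P}(U)$. After choosing an ordered basis $\{a_i\}_{i \in \mathcal{I}}$ of $R$, Proposition \ref{iso_U_dual} produces an algebra isomorphism $V \simeq \textbf{V} = \C[[X_i : i \in \mathcal{I}]]$. Transporting the topological vertex coalgebra structure of $V$ along this isomorphism and setting $F_{l}^{n}(X,Y) = \Delta_n(X_l)$, Proposition \ref{def_formal_vertex_law} guarantees that the resulting tuple $F(x)(X,Y)$ is a formal vertex law.

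\emph{From formal vertex laws to vertex bicoalgebras.} Given a formal vertex law $F$ of dimension $|\mathcal{I}|$, I would take $\textbf{V}_F = \C[[X_i : i \in \mathcal{I}]]$ with its natural commutative local linearly compact algebra structure, take the covacuum $c$ to be the augmentation that extracts the constant term, and define $\Ydual(x)$ as the unique continuous algebra homomorphism sending $X_l \mapsto F_l(x)(X,Y)$. The characterization of vertex bicoalgebras in terms of $\Ydual$ and $c$ being algebra maps (the proposition immediately preceding Proposition \ref{def_formal_vertex_law}) then reduces the task to verifying the four topological vertex coalgebra axioms, and these correspond exactly, via the computations of Proposition \ref{def_formal_vertex_law} read in reverse, to the four defining conditions of a formal vertex law.

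\emph{Morphisms and quasi-invertibility.} A morphism of formal vertex laws is a tuple $\alpha(X) \in \C[[X_i : i \in \mathcal{I}]]^{\mathcal{J}}$ with zero constant term satisfying $\alpha(F(x)(X,Y)) = G(x)(\alpha(X),\alpha(Y))$. By the universal property of the power series ring, such an $\alpha$ is the same as a continuous algebra homomorphism $\textbf{V}_G \to \textbf{V}_F$ carrying the maximal ideal into the maximal ideal; the compatibility with $F$ and $G$ says precisely that it intertwines $\Ydual_G$ and $\Ydual_F$, while zero constant term is equivalent to intertwining the covacuums. Hence the two constructions above are functorial and contravariant; composing them in either order produces the identity on objects up to the canonical isomorphism between a commutative local linearly compact algebra and a power series ring. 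The hardest technical point is the second functor: I need to confirm that the continuous multiplicative extension of $X_l \mapsto F_l(x)(X,Y)$ lands in $(\textbf{V}_F \hotimes \textbf{V}_F)[[x,x^{-1}]]^{conv}$ and satisfies axiom (\ref{punctual_convergence}) \emph{uniformly}. The convergence condition (\ref{convergence_vertex_formal_law}), phrased via the ideals $\mathcal{M}_{r,I}$, is engineered precisely for this; the delicate bookkeeping is to track which monomials $X^{\textbf{k}}Y^{\textbf{k}'}$ can appear in $\Delta_n(X^{\textbf{k}_0})$ for large $n$, and from there produce, for any prescribed open neighborhood of zero in $\textbf{V}_F \hotimes \textbf{V}_F$, a uniform bound past which all $\Delta_n$ land inside it.
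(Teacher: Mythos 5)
Your proposal is correct and, for the most part, follows the same route as the paper: the inverse construction (defining $\Ydual(x)$ as the unique continuous algebra map sending $X_l \mapsto F_l(x)(X,Y)$, pinning down $c$ by locality, and reducing the axioms to the computations of Proposition \ref{def_formal_vertex_law} together with the characterization of vertex bicoalgebras by $\Ydual(x)$ and $c$ being algebra maps) and the morphism dictionary ($\psi \leftrightarrow \alpha$) are exactly what the paper does. The one genuine divergence is in the forward direction: to identify a commutative local vertex bicoalgebra $V$ with a power series ring, the paper simply invokes the structure theorem that every commutative local linearly compact algebra is isomorphic to some $\C[[X_i : i \in \mathcal{I}]]$, whereas you dualize to $U = V'$, apply Cartier duality (Proposition \ref{Cartier_vertex_bialgebras}) and Milnor--Moore (Theorem \ref{Milnor_Moore}) to get $U \simeq \U(R)$, and then use Proposition \ref{iso_U_dual}. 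Your detour is heavier but has the advantage that Proposition \ref{def_formal_vertex_law} then applies literally (it is stated for $\U(R)^*$), while the paper's shorter route tacitly assumes those computations carry over to an arbitrary commutative local vertex bicoalgebra --- which they do, since they only use that $\Ydual(x)$ and $c$ are algebra homomorphisms, but this is worth saying. You are also more candid than the paper about the remaining technical point, namely that the multiplicative extension of $X_l \mapsto F_l(x)(X,Y)$ lands in $(\textbf{V} \hotimes \textbf{V})[[x,x^{-1}]]^{conv}$ and satisfies (\ref{punctual_convergence}) uniformly; the paper dispatches this in one sentence, leaning on the lemma that $W[[x,x^{-1}]]^{conv}$ is an algebra and on the uniformity over $l$ built into condition (\ref{convergence_vertex_formal_law}), and your plan for that verification is the right one.
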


\begin{proof}
First, it is well-known that any linearly compact algebra that is\linebreak commutative and local is isomorphic to a formal power series algebra $A=\C[[X_i : i \in \mathcal{I}]]$ for some $\mathcal{I}$. Thus any vertex bicoalgebra $V$ with those\linebreak properties must be of that form, and we may use Proposition \ref{def_formal_vertex_law} to assign a formal vertex law to $V$, which we shall denote as $F^{V}(x)(X,Y)$.

On the other hand, starting from a formal vertex law $F(x)(X,Y)$ of dimension $|\mathcal{I}|$, we can construct a vertex bicoalgebra structure over $V = \C [[X_i : i \in \mathcal{I}]]$ as follows: since axiom (\ref{convergence_vertex_formal_law}) translates to $F_i(x)(X,Y)$ being an element of the algebra $(V \hotimes V)[[x,x^{-1}]]^{conv}$, the universal property of the power series algebra $V$ implies that there exists a unique algebra map $\Ydual(x)$ from $V$ to that algebra such that its value at each $X_{i}$ coincides with $F_{i}(x)(X,Y)$. The covacuum map $c$ is also uniquely determined because there is only one algebra homomorphism $V \rightarrow \C$, due to $V$ being local.

Thus we have shown that for each formal vertex law of dimension $|\mathcal{I}|$ there exists a unique vertex bicoalgebra structure over $\textbf{V}=\C[[X_i : i \in \mathcal{I}]]$ such that (\ref{def_Fijm}) holds. We will now see that this correspondence extends to morphisms in their respective categories, which will give us the desired antiequivalence.

If $\psi : \tilde{V} \rightarrow V$ is a linear map between commutative local vertex bicoalgebras, we may assume that $V = \C [[X_i : i \in \mathcal{I}]]$ and $\tilde{V} = \C [[\tilde{X}_j : j \in \mathcal{J}]]$. Now we define $\alpha(X) \in (\C[[X_i : i \in \mathcal{I}]])^{\mathcal{J}}$ as $\alpha_{j}(X) = \psi(\tilde{X}_j)$ for all $j \in \mathcal{J}$. Let us write $\alpha_j(X)=\sum_{m \in \textbf{K}}\alpha_{j\textbf{m}}X^{\textbf{m}}$ with $\alpha_{j\textbf{m}}\in \C$ for all $j \in \mathcal{J}$ and $\textbf{m} \in \textbf{K}$. Then for all $j \in \mathcal{J}$ we have that

\begin{align*}
\alpha_j(F^{V}(x)(X,Y)) &= \sum_{m \in \textbf{K}}\alpha_{j\textbf{m}}X^{\textbf{m}}\Big|_{X=F^{V}(x)(X,Y)}\\
&=\sum_{m \in \textbf{K}}\alpha_{j\textbf{m}}\prod_{i \in \mathcal{I}}(\Ydual_V(x)(X_i))^{\textbf{m}_i}\\
&=\Ydual_{V}(x)(\alpha_j(X))\\
&=\Ydual_{V}(x)(\psi(\tilde{X}_j)).
\end{align*}

Similarly, 
\begin{align*}
F_j^{\tilde{V}}(x)(\alpha(X),\alpha(Y))&=\sum_{\tilde{\textbf{k}},\tilde{\textbf{k}}' \in \tilde{\textbf{K}}}c^{j}_{\tilde{\textbf{k}}\tilde{\textbf{k}}'}(x)\tilde{X}^{\tilde{\textbf{k}}} \hotimes\tilde{X}^{\tilde{\textbf{k}}'}\Big|_{\tilde{X}=\alpha(X)}\\
&=\sum_{\tilde{\textbf{k}},\tilde{\textbf{k}}' \in \tilde{\textbf{K}}} c^{j}_{\tilde{\textbf{k}}\tilde{\textbf{k}}'}(x)\prod_{j \in \mathcal{J}} \psi(\tilde{X}_j)^{k_j} \hotimes \prod_{j' \in \mathcal{J}} \psi(\tilde{X}_{j'})^{k'_{j'}}\\
&=(\psi \hotimes \psi) \sum_{\tilde{\textbf{k}},\tilde{\textbf{k}}' \in \tilde{\textbf{K}}}c^{j}_{\tilde{\textbf{k}}\tilde{\textbf{k}}'}(x)\tilde{X}^{\tilde{\textbf{k}}} \hotimes\tilde{X}^{\tilde{\textbf{k}}'}\\
&=(\psi \hotimes \psi)\Ydual_{\tilde{V}}(x)(\tilde{X}_j).
\end{align*}

Therefore, $\psi$ is a vertex bicoalgebra homomorphism if and only if $\alpha(X)$ is a formal vertex law homomorphism.
\end{proof}

Putting together this result, a restricted version of the Cartier duality proved in the last section (using the fact that the concepts of locality for linearly compact algebras and connectedness for coalgebras are dual to each other) and Proposition \ref{equiv_conformal_and_vertex_bialg}, we immediately arrive to the following important statement.

\begin{corollary}\label{correspondence_FVL_Conformal}
The category of formal vertex laws is equivalent to the\linebreak category of Lie conformal algebras.
\end{corollary}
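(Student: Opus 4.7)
The plan is to obtain the equivalence by composing the three correspondences already established in the paper, after checking that the hypotheses match on every intermediate category. Concretely, I want to read the statement as the composition of: Proposition \ref{equiv_conformal_and_vertex_bialg} (Lie conformal algebras $\simeq$ connected cocommutative vertex bialgebras), the Cartier duality of Proposition \ref{Cartier_vertex_bialgebras} (an antiequivalence $\mathcal{VBA} \simeq \mathcal{VBC}^{op}$), and Proposition \ref{formal_vertex_laws_and_vertex_bicoalgebras} (an antiequivalence between formal vertex laws and the full subcategory of $\mathcal{VBC}$ whose underlying linearly compact algebra is commutative and local). The composition of two antiequivalences is an equivalence, so once the subcategories line up one obtains the desired equivalence.

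The first step is to verify that Cartier duality restricts to an antiequivalence between \emph{connected cocommutative} vertex bialgebras and \emph{commutative local} vertex bicoalgebras. The ``vertex'' part of the structure plays no role here: the restriction follows from the classical fact, which I would cite from \cite{Dieudonne} or \cite{Hazewinkel_formalgroups}, that under the dualization $U \mapsto U^{*}$ between coalgebras and linearly compact algebras, cocommutativity corresponds to commutativity of the product induced by $\Delta^{*}$, and connectedness (coradical $\C \1$) corresponds to the dual linearly compact algebra being local (unique continuous maximal ideal, namely the annihilator of $\C \1$). Since Cartier duality for vertex (bi)algebras is built by grafting the topological vertex coalgebra structure onto this underlying (co)algebra duality, the two properties transfer without additional work.

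The second step is to combine. Writing $\mathcal{LCC}$ for the category of Lie conformal algebras, $\mathcal{VBA}_{cc}$ for connected cocommutative vertex bialgebras, and $\mathcal{VBC}_{cl}$ for commutative local vertex bicoalgebras, I would assemble the chain
\begin{equation*}
\mathcal{LCC} \xrightarrow{\;\U\;} \mathcal{VBA}_{cc} \xrightarrow{\;(\,\cdot\,)^{*}\;} \mathcal{VBC}_{cl}^{op} \xrightarrow{\;F^{(\cdot)}\;} \mathcal{FVL},
\end{equation*}
where the first arrow is the equivalence of Proposition \ref{equiv_conformal_and_vertex_bialg}, the second is the Cartier antiequivalence restricted as above, and the third is the antiequivalence of Proposition \ref{formal_vertex_laws_and_vertex_bicoalgebras} sending a commutative local vertex bicoalgebra to the formal vertex law built from its $n$-th coproducts via the isomorphism of Proposition \ref{iso_U_dual}. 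The composition of two antiequivalences is an equivalence, so the whole chain is an equivalence of categories.

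The only potentially delicate point, and therefore the one I would make sure to state explicitly, is that the commutative-local subcategory of $\mathcal{VBC}$ hits exactly the image of the dualized $\U(R)$'s: for every commutative local vertex bicoalgebra $V$, its continuous dual $V'$ is a connected cocommutative vertex bialgebra, hence by Theorem \ref{Milnor_Moore} is isomorphic to $\U(\mathcal{P}(V'))$, so the Cartier restriction is essentially surjective onto $\mathcal{VBC}_{cl}$. With this observation in place no further calculation is needed, and the corollary follows formally from the three previous results.
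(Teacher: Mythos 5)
Your proposal is correct and follows essentially the same route as the paper: the corollary is obtained by composing Proposition \ref{equiv_conformal_and_vertex_bialg}, a restricted form of the Cartier duality of Proposition \ref{Cartier_vertex_bialgebras} (using that connectedness of coalgebras is dual to locality of linearly compact algebras, and cocommutativity to commutativity), and the antiequivalence of Proposition \ref{formal_vertex_laws_and_vertex_bicoalgebras}. Your extra remark on essential surjectivity via Theorem \ref{Milnor_Moore} is a sensible explicit check of what the paper leaves implicit, but it does not change the argument.
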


The results obtained in this section show that formal vertex laws are completely analogous to formal group laws in classical Lie theory. Since formal group laws may be used to define Lie groups under suitable convergence conditions, it is reasonable to expect that we may use formal vertex laws in a similar fashion, obtaining an object which would be the conformal analogous to a Lie group. We are currently working towards this goal, and we have found promising relations between these results and Beilinson and Drinfeld's notion of \textit{factorization space} (also known as chiral monoid \cite{Beilinson_Drinfeld}), which shall appear in future publications.\\

\end{document}